\def\qed{\hfill \vrule height 7pt width 7pt depth 0pt\medskip}
\def\beq{\begin{equation}}
\def\eeq{\end{equation}}
\def\proof{\noindent{\bf Proof}\ \ }
\newcommand{\ds}{\displaystyle}
\newcommand{\ba}{\begin{array}}
\newcommand{\ea}{\end{array}}
\renewcommand{\l}{\left}\renewcommand{\r}{\right}
\newcommand{\be}{\begin{equation}}
\newcommand{\ee}{\end{equation}}
\newcommand{\eps}{\varepsilon}
\newcommand{\ov}{\overline}
\newcommand{\R}{\mathbb{R}}
\newcommand{\de}{\mathrm{d}}
\newcommand{\se}{\text{ if }}
\DeclareMathOperator{\sgn}{sgn}
\DeclareMathOperator{\cl}{cl}
\DeclareMathOperator*{\argmin}{argmin}
\newcommand{\real}{{\mathbb{R}}}
\newcommand{\realnoneg}{\R_+}
\newcommand{\subscr}[2]{{#1}_{\textup{#2}}}
\newcommand{\supscr}[2]{{#1}^{\textup{#2}}}
\newcommand{\mc}{\mathcal}
\newcommand{\flowmax}{\supscr{f}{max}}
\newcommand{\flowmaxe}{\subscr{f}{e}^\textup{max}}
\newcommand{\newflowmaxe}{\subscr{\tilde{f}}{e}^\textup{max}}
\newcommand{\flow}{f}
\newcommand{\floweq}{{\flow}^{\text{*}}}
\newcommand{\rhomedian}{{\rho}^{\mu}}
\newcommand{\tilderhomedian}{\tilde{\rho}^{\mu}}
\newcommand{\graph}{\mc T}
\newcommand{\onebf}{\mathbf{1}}
\newcommand{\ilogitconst}{\eta}
\newtheorem{assumption}{Assumption}
\newtheorem{theorem}{Theorem}
\newtheorem{proposition}{Proposition}
\newtheorem{definition}{Definition}
\newtheorem{lemma}{Lemma}
\newtheorem{remark}{Remark}
\newtheorem{example}{Example}
\newcommand\oprocendsymbol{\hbox{$\square$}}
\newcommand\oprocend{\relax\ifmmode\else\unskip\hfill\fi\oprocendsymbol}
\begin{document}

\title{Robust Distributed Routing in Dynamical Flow Networks --
Part I: Locally Responsive Policies and Weak Resilience}

\author{Giacomo Como\thanks{G.~Como, K.~Savla, M.A.~Dahleh and E.~Frazzoli are with the Laboratory for Information and Decision Systems at the Massachusetts Institute of Technology. \texttt{\{giacomo,ksavla,dahleh,frazzoli\}@mit.edu.}} \quad Ketan Savla \quad Daron Acemoglu\thanks{D.~Acemoglu is with the Department of Economics at the Massachusetts Institute of Technology. \texttt{daron@mit.edu}.} \quad Munther A. Dahleh \quad Emilio Frazzoli \thanks{This work was supported in part by NSF EFRI-ARES grant number 0735956. Any opinions, findings, and conclusions or recommendations expressed in this publication are those of the authors and do not necessarily reflect the views of the supporting organizations. G. Como and K. Savla thank Prof. Devavrat Shah for helpful discussions.
A preliminary version of this paper appeared in part as \cite{Como.Savla.ea:MTNS10}.
}} 

\maketitle

\begin{abstract}
Robustness of distributed routing policies is studied for dynamical flow networks, with respect to adversarial disturbances that reduce the link flow capacities. A dynamical flow network is modeled as a system of ordinary differential equations derived from mass conservation laws on a directed acyclic graph with a single origin-destination pair and a constant inflow at the origin. Routing policies regulate the way the inflow at a non-destination node gets split among its outgoing links as a function of the current particle density, while the outflow of a link is modeled to depend on the current particle density on that link through a flow function. The dynamical flow network is called partially transferring if the total inflow at the destination node is asymptotically bounded away from zero, and its weak resilience is measured as the minimum sum of the link-wise magnitude of all disturbances that make it not partially transferring. The weak resilience of a dynamical flow network with arbitrary routing policy is shown to be upper-bounded by the network's min-cut capacity, independently of the initial flow conditions. Moreover, a class of distributed routing policies that rely exclusively on local information on the particle densities, and are locally responsive to that, is shown to yield such maximal weak resilience.   These results imply that locality constraints on the information available to the routing policies do not cause loss of weak resilience. Some fundamental properties of dynamical flow networks driven by locally responsive distributed policies are analyzed in detail, including global convergence to a unique limit flow. 
\end{abstract}

\textbf{Index terms:} dynamical flow networks, distributed routing policies, weak resilience, min-cut capacity, cooperative dynamical systems.

\section{Introduction}
Flow networks provide a fruitful modeling framework for many applications of interest such as transportation, data, and production networks. They entail a fluid-like description of the macroscopic motion of \emph{particles}, which are routed from their origins to their destinations via intermediate nodes: we refer to standard textbooks, such as \cite{Ahuja.Magnanti.ea:93}, for a thorough treatment. 

The present and a companion paper \cite{PartII} study \emph{dynamical flow networks}, modeled as systems of ordinary differential equations derived from mass conservation laws on directed acyclic graphs with a single origin-destination pair and a constant inflow at the origin. The rate of change of the particle density on each link of the network equals the difference between the \emph{inflow} and the \emph{outflow} of that link. The latter is modeled to depend on the current particle density on that link through a \emph{flow function}. On the other hand, the way the inflow at an intermediate node gets split among its outgoing links depends on the current particle density, possibly on the whole network, through a \emph{routing policy}. Such a routing policy is said to be \emph{distributed} if the proportion of inflow routed to the outgoing links of a node is allowed to depend only on \emph{local information}, consisting of the current particle densities on the outgoing links of the same node.  
The inspiration for such a modeling paradigm comes from empirical findings from several application domains:   in transportation networks \cite{Garavello.Piccoli:06}, the flow functions are typically referred to as \emph{fundamental diagrams}, while the routing policies model the emerging selfish behavior of drivers; in data networks \cite{BertsekasGallager}, flow functions model congestion-dependent \emph{throughput} and \emph{average delays}, while routing policies are designed in order to optimize the total throughput or other performance measures; in production networks \cite{Karmarkar:89}, flow functions correspond to \emph{clearing functions}.

Our objective is the design and analysis of distributed routing policies for dynamical flow networks that are \emph{maximally robust} with respect to \emph{adversarial disturbances} that reduce the link flow capacities. Two notions of transfer efficiency are introduced in order to capture the extremes of the resilience of the network towards disturbances: The dynamical flow network is \emph{fully transferring} if the total inflow at the destination node asymptotically approaches the inflow at the origin node, and \emph{partially transferring} if the total inflow at the destination node is asymptotically bounded away from zero. The robustness of distributed routing policies is evaluated in terms of the network's \emph{strong} and \emph{weak} \emph{resilience}, which are defined as the minimum sum of link-wise magnitude of disturbances making the perturbed dynamical flow network not fully transferring, and, respectively, not partially transferring. In this paper, we prove that the maximum possible weak resilience is yielded by a class of \emph{locally responsive} distributed routing policies, which rely only on \emph{local information} on the current particle densities on the network, and are characterized by the property that the portion of its inflow that a node routes towards an outgoing link does not decrease as the particle density on any other outgoing link increases. Moreover, we show that the maximum weak resilience of dynamical flow networks with arbitrary, not necessarily distributed, routing policies equals the \emph{min-cut capacity} of the network and hence is independent of the initial equilibrium flow. We also prove some fundamental properties of dynamical flow networks driven by locally responsive distributed policies, including global convergence to a unique limit flow. Such properties are mainly a consequence of the particular \emph{cooperative} structure (in the sense of \cite{Hirsch:82,Hirsch:85}) that the dynamical flow network inherits from locally responsive routing policies. 

Stability analysis of network flow control policies under non-persistent disturbances, especially in the context of internet, has attracted a lot of attention, e.g., see \cite{Vinnicombe:02,Paganini:02,Low.Paganini.ea:02,Fan.Arcak.ea:04}. Recent work on robustness analysis of static flow networks under adversarial and probabilistic persistent disturbances in the spirit of this paper include  \cite{Vardi.Zhang:07,Bulteau.Rubino.ea:97,Sengoku.Shinoda.ea:88}. It is worth comparing the distributed routing policies studied in this paper with the back-pressure policy~\cite{Tassiulas.Ephremides:92}, which is one of the most well-known robust distributed routing policy for queueing networks. While relying on local information in the same way as the distributed routing policies studied here, back-pressure policies require the nodes to have, possibly unlimited, buffer capacity. In contrast, in our framework, the nodes have no buffer capacity. In fact, the distributed routing policies considered in this paper are closely related to the well-known \emph{hot-potato} or deflection routing policies \cite{Acampora.Shah:92} \cite[Sect.~5.1]{BertsekasGallager}, where the nodes route incoming packets immediately to one of the outgoing links. However, to the best of our knowledge, the robustness properties of dynamical flow networks, where the outflow from a link is not necessarily equal to its inflow have not been studied before. 

The contributions of this paper are as follows: (i) we formulate a novel dynamical system framework for robustness analysis of dynamical flow networks under feedback routing policies, possibly constrained in the available information; (ii) we characterize a general class of locally responsive distributed routing policies that yield the maximum weak resilience; (iii) we provide a simple characterization of the resilience in terms of the topology and capacity of the flow network. In particular, the class of locally responsive distributed routing policies can be interpreted as  approximate Nash equilibria in an appropriate zero-sum game setting where the objective of the adversary inflicting the disturbance is to make the network not partially transferring with a disturbance of minimum possible magnitude, and the objective of the system planner is to design distributed routing policies that yield the maximum possible resilience.  
The results of this paper imply that locality constraints on the information available to routing policies do not affect the maximally achievable weak resilience. In contrast,  the companion paper \cite{PartII} focuses on the strong resilience properties of dynamical flow networks, and shows that locally responsive distributed routing policies are maximally robust, but only within the class of distributed routing policies which are constrained to use only local information on the network congestion status.  


The rest of the paper is organized as follows. 
In Section~\ref{sec:model}, we formulate the problem by formally defining the notion of a dynamical flow network and its resilience, and we prove that the weak resilience of a dynamical flow network driven by an arbitrary, not necessarily distributed, routing policy is upper-bounded by the min-cut capacity of the network. In Section~\ref{sec:comparison}, we introduce the class of locally responsive distributed routing policies, and state the main results on dynamical flow networks driven by such locally responsive distributed routing policies: Theorem \ref{thm:uniquelimitflow}, concerning global convergence towards a unique equilibrium flow; and Theorem \ref{maintheo-weakstability} concerning the maximal weak resilience property. In Sections~\ref{sec:proofthmuniquelimit}, and \ref{sec:proof2}, we state proofs of Theorem \ref{thm:uniquelimitflow}, and Theorem \ref{maintheo-weakstability}, respectively.

Before proceeding, we define some preliminary notation to be used throughout the paper. Let $\real$ be the set of real numbers, $\realnoneg:=\{x\in\R:\,x\ge0\}$ be the set of nonnegative real numbers. Let $\mc A$ and $\mc B$ be finite sets. Then, $|\mc A|$ will denote the cardinality of $\mc A$, $\R^{\mc A}$ (respectively, $\R_+^{\mc A}$) the  space of real-valued (nonnegative-real-valued) vectors whose components are indexed  by elements of $\mc A$, and $\R^{\mc A\times\mc B}$ the space of matrices whose real entries indexed  by pairs of elements in $\mc A\times\mc B$. The transpose of a matrix $M \in \real^{\mc A \times\mc B}$, will be denoted by $M^T \in\R^{\mc B\times\mc A}$, while $\onebf$ the all-one vector, whose size will be clear from the context. Let $\cl(\mc X)$ be the closure of a set $\mc X\subseteq\R^{\mc A}$. A directed multigraph is the pair $(\mc V,\mc E)$ of a finite set $\mc V$ of nodes, and of a multiset $\mc E$ of links consisting of ordered pairs of nodes (i.e., we allow for parallel links). Given a a multigraph $(\mc V,\mc E)$, for every node $v\in\mc V$, we shall denote by $\mc E^+_v\subseteq\mc E$, and $\mc E^-_v\subseteq\mc E$, the set of its outgoing and incoming links, respectively. Moreover, we shall use the shorthand notation $\mc R_v:=\R_+^{\mc E^+_v}$ for the set of nonnegative-real-valued vectors whose entries are indexed  by elements of $\mc E^+_v$, $\mc S_v:=\{p\in \mc R_v:\,\sum_{e \in \mc E_v^+} p_e=1\}$ for the simplex of probability vectors over $\mc E^+_v$, and $\mc R:=\R_+^{\mc E}$ for the set of nonnegative-real-valued vectors whose entries are indexed  by the links in $\mc E$.

\section{Dynamical flow networks and their resilience}
\label{sec:model}
In this section, we introduce our model of dynamical flow networks and define the notions of transfer efficiency. 

\subsection{Dynamical flow networks}
We start with the following definition of a flow network.\medskip

\begin{definition}[Flow network]\label{def:flownetwork}
A \emph{flow network} $\mc N=(\mc T,\mu)$ is the pair of a \emph{topology}, described by a finite directed multigraph $\mc T=(\mc V,\mc E)$, where $\mc V$ is the node set and $\mc E$ is the link multiset, and a family of \emph{flow functions} $\mu:=\{\mu_e:\R_+\to\R_+\}_{e\in\mc E}$ describing the functional dependence $f_e=\mu_e(\rho_e)$ of the flow on the density of particles on every link $e\in\mc E$. 
The  \emph{flow capacity} of a link $e\in\mc E$ is defined as 
\be \flowmax_e:=\sup_{\rho_e \geq 0} \mu_e(\rho_e)\,. \ee
\end{definition}\medskip

\begin{figure}
\begin{center}
\includegraphics[width=10cm,height=7cm]{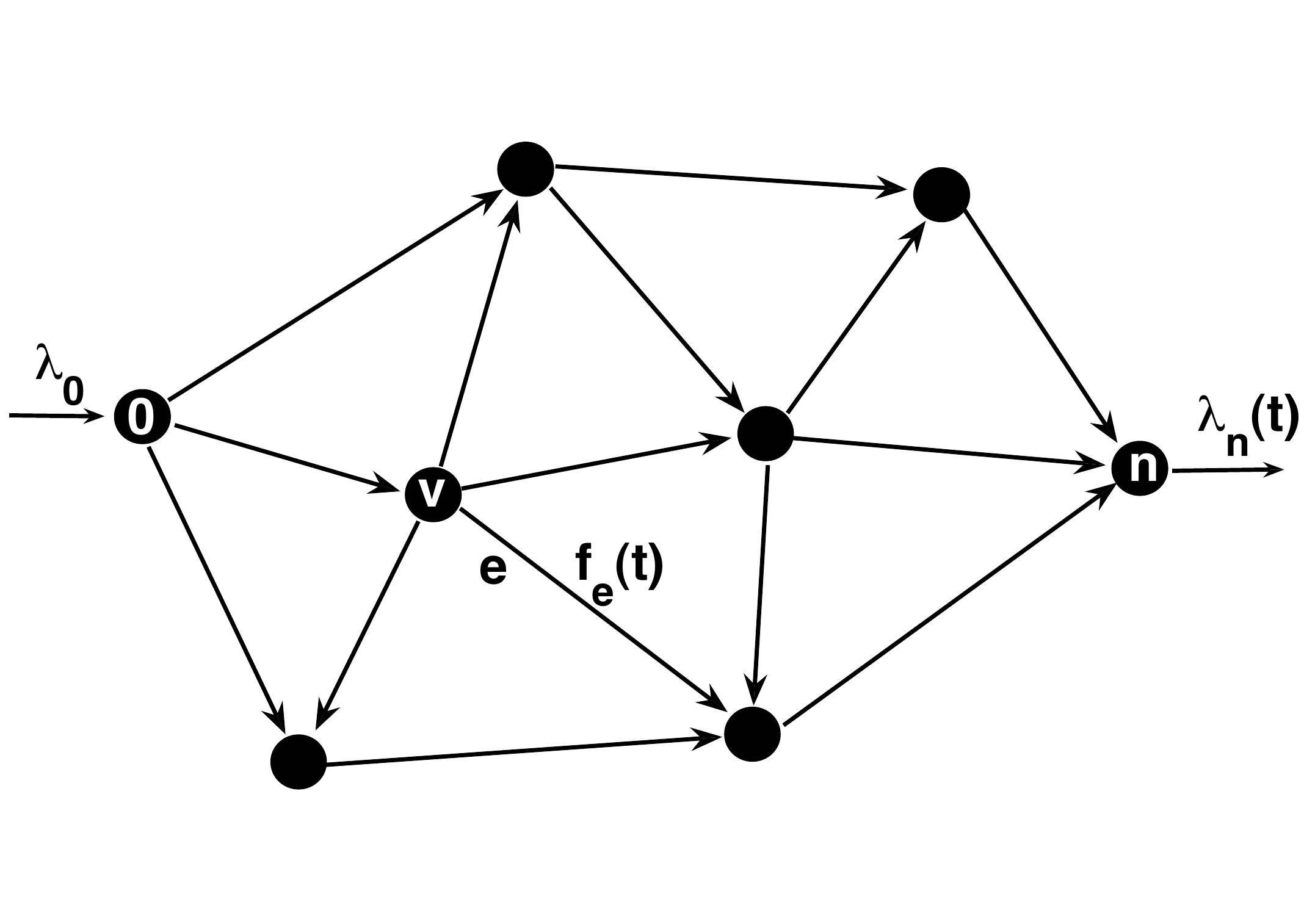}
\end{center}
\caption{\label{fig:IOnetwork}A network topology satisfying Assumption \ref{ass:acyclicity}: the nodes $v$ are labeled by the integers between $0$ (denoting the origin node) and $n$ (denoting the destination node), in such a way that the label of the head node of each edge is higher than the label of its tail node. The inflow at the origin, $\lambda_0$, maybe interpreted as the input to the dynamical flow network, and the total inflow at the destination, $\lambda_n(t)$, as the output. For $\alpha\in(0,1]$, the dynamical flow network is $\alpha$-transferring if $\liminf_{t\to+\infty}\lambda_n(t)\ge\alpha\lambda_0$, i.e., if at least $\alpha$-fraction of the inflow at the origin is transferred to the destination, asymptotically.}
\end{figure}

We shall use the notation $\mc F_v:=\times_{e\in\mc E^+_v}[0,\flowmax_e)$ for the set of admissible flow vectors on outgoing links from node $v$, and $\mc F:=\times_{e\in\mc E}[0,\flowmax_e)$ for the set of admissible flow vectors for the network. We shall write $f:=\{f_e:\,e\in\mc E\}\in\mc F$, and $\rho:=\{\rho_e:\,e\in\mc E\}\in\mc R$, for the vectors of flows and of densities, respectively, on the different links. The notation $f^v:=\{f_e:\,e\in\mc E^+_v\}\in\mc F_v$, and $\rho^v:=\{\rho_e:\,e\in\mc E^+_v\}\in\mc R_v$ will stand for the vectors of flows and densities, respectively, on the outgoing links of a node $v$. We shall compactly denote by $f=\mu(\rho)$ and $f^v=\mu^v(\rho^v)$ the functional relationships between density and flow vectors.

Throughout this paper, we shall restrict ourselves to network topologies satisfying the following: \medskip
\begin{assumption}\label{ass:acyclicity}
The topology $\mc T$ contains no cycles, has a unique origin (i.e., a node $v\in\mc V$ such that $\mc E^-_v$ is empty), and a unique destination (i.e., a node $v\in\mc V$ such that $\mc E^+_v$ is empty). Moreover, there exists a path in $\mc T$ to the destination node from every other node in $\mc V$. 
\end{assumption}\medskip

\begin{figure}
\begin{center}
\includegraphics[width=10cm,height=7cm]{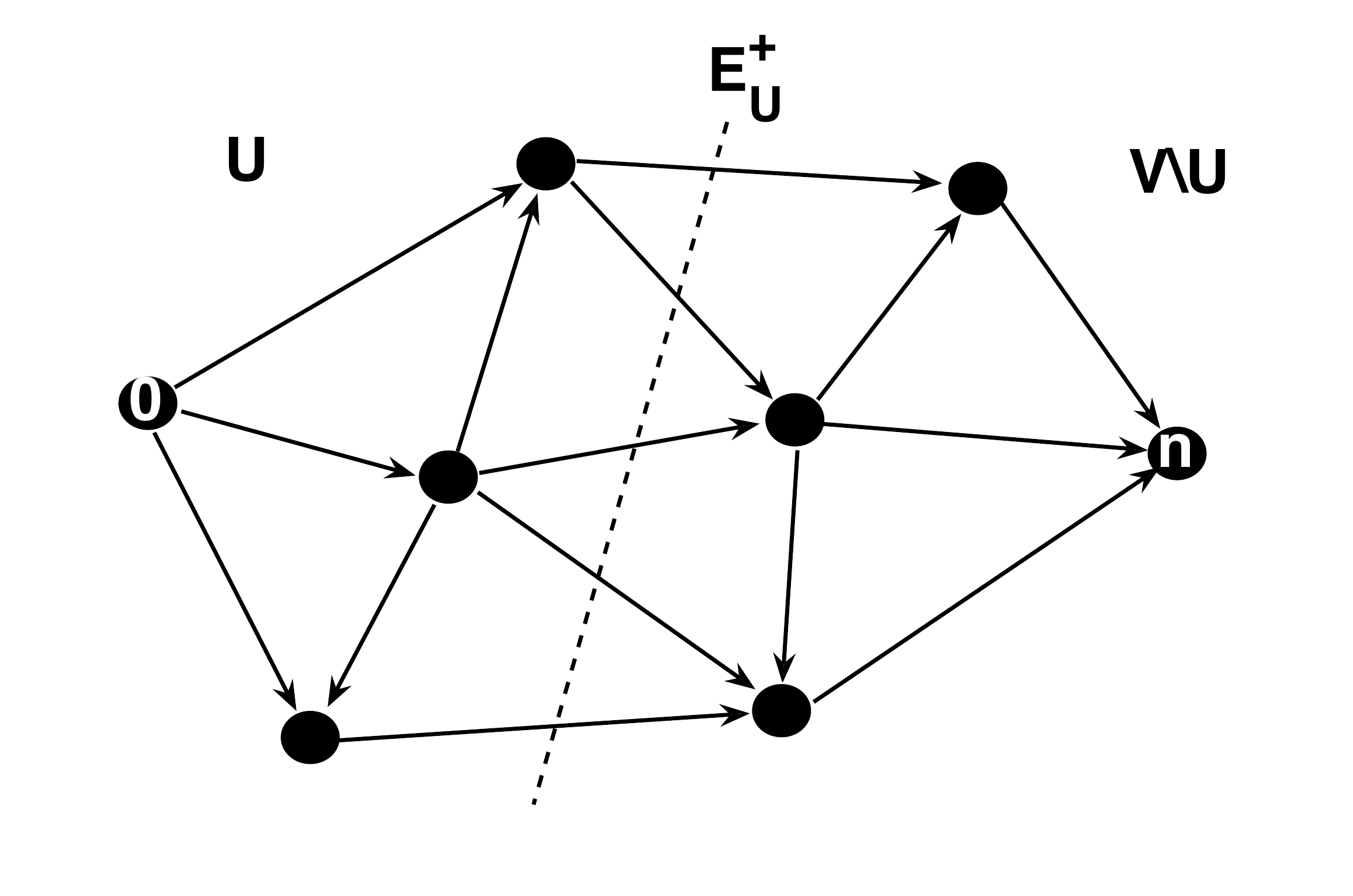}
\end{center}
\caption{An origin/destination cut of the network: $\mc U$ is a subset of nodes including the origin $0$ but not the destination $n$, and $\mc E^+_{\mc U}$ is the subset of those edges with tail node in $\mc U$, and head node in $\mc V\setminus\mc U$. \label{fig:mincut}}
\end{figure}

Assumption \ref{ass:acyclicity} implies that one can find a (not necessarily unique) topological ordering of the node set $\mc V$ (see, e.g., \cite{Cormen.Leiserson:90}). We shall assume to have fixed one such ordering, identifying $\mc V$ with the integer set $\{0,1,\ldots,n\}$, where $n:=|\mc V|-1$, in such a way that 
\be\label{vertexordering}\mc E^-_{v}\subseteq\bigcup\nolimits_{0\le u<v}\mc E^+_{u}\,,\qquad\forall  v=0,\ldots,n\,.\ee
In particular, (\ref{vertexordering}) implies that $0$ is the origin node, and $n$ the destination node in the network topology $\mc T$ (see Fig.~\ref{fig:IOnetwork}). An \emph{origin-destination cut}  (see, e.g., \cite{Ahuja.Magnanti.ea:93}) of $\graph$  is a partition of $\mc V$ into $\mc U$ and $\mc V \setminus\mc U$ such that $0 \in \mc U$ and $n \in \mc V \setminus \mc U$. Let \be\label{EU+def}\mc E_{\mc U}^+:=\{(u,v)\in\mc E:\,u\in\mc U,v\in\mc V\setminus\mc U\}\ee be the set of all the links pointing from some node in $\mc U$ to some node in $\mc V \setminus \mc U$ (see Fig.~\ref{fig:mincut}). The \emph{min-cut capacity} of a flow network $\mc N$ is defined as  
\be\label{def:capacity} C (\mc N):=\min_{\mc U} \sum\nolimits_{e \in \mc E_{\mc U}^+} \flowmaxe\,,\ee
where the minimization runs over all the origin-destination cuts of $\mc T$. Throughout this paper, we shall assume a constant inflow $\lambda_0 \ge 0$ at the origin node.
Let us define the set of \emph{admissible equilibrium flows} associated to an inflow $\lambda_0$ as  
$$\mc F^*(\lambda_0):=\l\{f^*\in\mc F:\,\sum\nolimits_{e\in\mc E^+_0}f_e^*=\lambda_0,\,\sum\nolimits_{e\in\mc E^+_v}f_e^*=\sum\nolimits_{e\in\mc E^-_v}f_e^*,\,\forall \, 0<v<n\r\}\,.$$
Then, it follows from the max-flow min-cut theorem (see, e.g., \cite{Ahuja.Magnanti.ea:93}), that $\mc F^*(\lambda_0)\ne\emptyset$ whenever $\lambda_0<C(\mc N)$. That is, the min-cut capacity equals the maximum flow that can pass from the origin to the destination node while satisfying capacity constraints on the links, and conservation of mass at the intermediate nodes.

\begin{figure}
\begin{center}
\includegraphics[width=10cm,height=6cm]{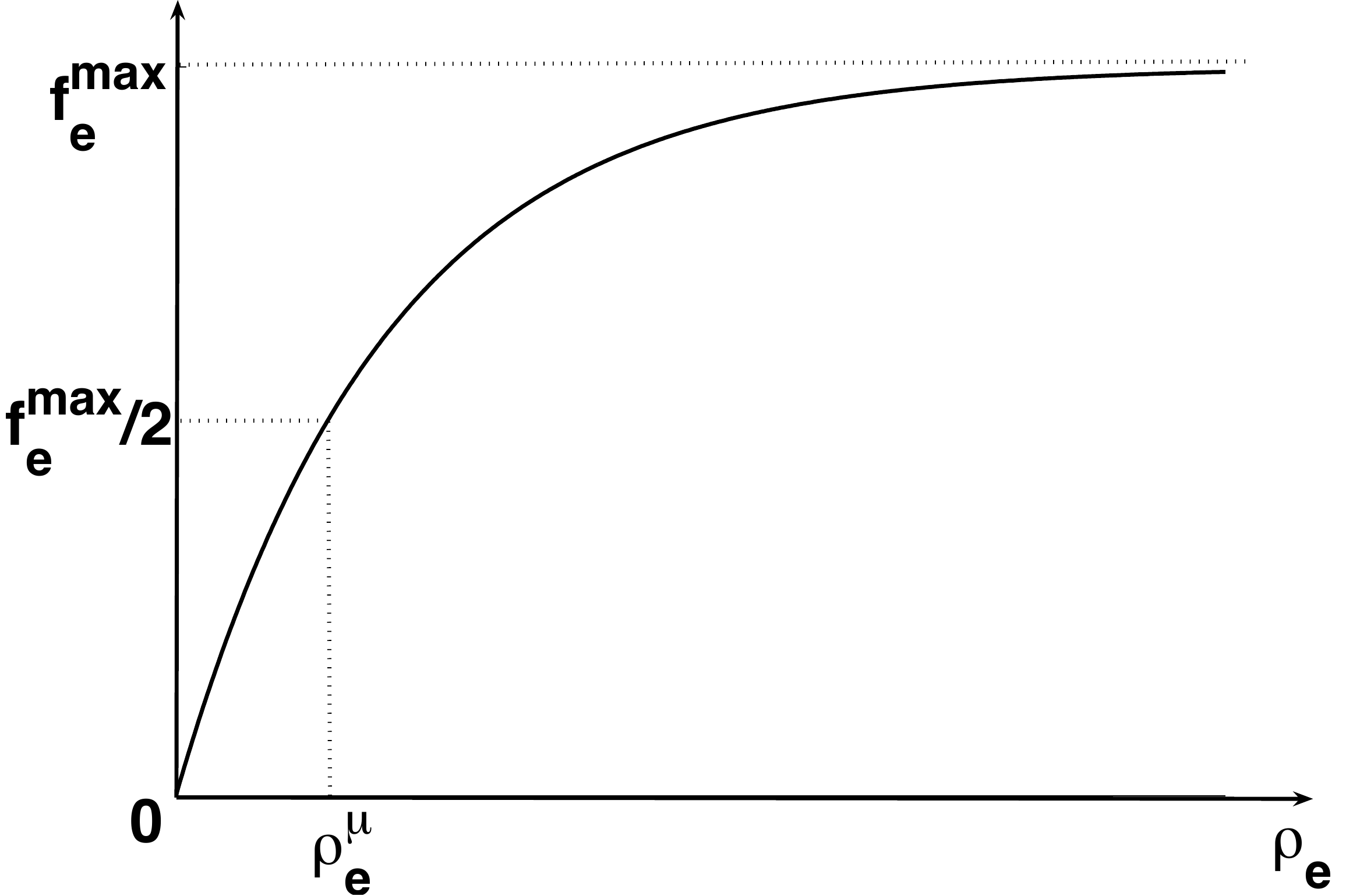}
\end{center}
\caption{\label{fig:flowfunction}Qualitative behavior of a flow function satisfying Assumption \ref{ass:flowfunction}: $\mu_e(\rho_e)$ is differentiable, strictly increasing, has bounded derivative and such that $\mu_e(0)=0$, and $\lim\limits_{\rho_e\to+\infty}\mu_e(\rho_e)=f_e^{\max}<+\infty$. The median density $\rho^{\mu}_e$, as defined in (\ref{eq:median-def}) is plotted as well. }
\end{figure}

Throughout the paper, we shall make the following assumption on the flow functions (see also Fig.~\ref{fig:flowfunction}): \medskip
\begin{assumption}\label{ass:flowfunction}
For every link $e\in\mc E$, the map $\mu_e:\R_+\to\R_+$ is continuously differentiable, strictly increasing, has bounded derivative, and is such that $\mu_e(0)=0$, and $f_e^{\max}<+\infty$. 
\end{assumption}\medskip 

Thanks to Assumption \ref{ass:flowfunction}, one can define the \emph{median density} on link $e\in\mc E$ as the unique value $\rhomedian_e\in\R_+$ such that 
\be
\label{eq:median-def}
\mu_e(\rhomedian_e)=f_e^{\max}/2.
\ee

\begin{example}[Flow function]
\label{example:flowfunction}
For every link $e\in\mc E$, let $a_e$ and $f_e^{\max}$ be positive real constants. Then, a simple example of flow function satisfying Assumption~\ref{ass:flowfunction} is given by $$\mu_e(\rho_e)=\flowmax_e \left(1- \exp(-a_e \rho_e) \right)\,.$$ It is easily verified that the flow capacity is $\flowmax_e$, while the median density for such a flow function is $\rhomedian_e=a_e^{-1}\log2$.
\end{example}\medskip

We now introduce the notion of a distributed routing policy used in this paper. 

\begin{definition}[(Distributed) routing policy]\label{def:distributedroutingpolicy}
A \emph{routing policy} for a flow network $\mc N$ is a family of differentiable functions $\mc G:=\{G^v:\mc R\to\mc S_v \}_{0\le v<n}$ describing the ratio in which the particle flow incoming in each non-destination node $v$ gets split among its outgoing link set $\mc E^+_v$, as a function of the observed current particle density. A routing policy is said to be \emph{distributed} if, for all $0\le v<n$, there exists a differentiable function $\ov G:\mc R_v\to\mc S_v$ such that $G^v(\rho)=\ov G^v(\rho^v)$ for all $\rho\in\mc R$, where $\rho^v$ is the projection of $\rho$ on the outgoing link set $\mc E^+_v$. 
\end{definition}\medskip

The salient feature in Definition \ref{def:distributedroutingpolicy} is that a distributed routing policy depends only on the \emph{local information} on the particle density $\rho^v$ on the set $\mc E^+_v$ of outgoing links of the non-destination node $v$, instead of the full vector of current particle densities $\rho$ on the whole link set $\mc E$. Throughout this paper, we shall make a slight abuse of notation and write $G^v(\rho^v)$, instead of $\ov G^v(\rho^v)$, for the vector of the fractions in which the inflow of node $v$ gets split into its outgoing links. 

We are now ready to define a dynamical flow network.\medskip 

\begin{definition}[Dynamical flow network]\label{def:dynamicalflownetwork}
A \emph{dynamical flow network} associated to a flow network $\mc N$ satisfying Assumption \ref{ass:acyclicity}, a distributed routing policy $\mc G$, and an inflow $\lambda_0\ge0$, is the dynamical system 
\be\label{dynsyst}
\ds\frac{\de}{\de t}\rho_e(t)=\lambda_v(t)G^v_e(\rho(t))-f_e(t)\,,\qquad \forall\,0\le v<n\,,\quad\forall\,e\in\mc E^+_v\,,\ee
where
\be\label{felambdadef} f_e(t):=\mu_e(\rho_e(t))\,,\qquad \lambda_v(t):=\l\{\ba{lcl}\lambda_0&\se&v=0\\\sum_{e\in\mc E^-_v}f_e(t)&\se&0<v \leq  n.\ea\r.\ee
\end{definition}\medskip

Equation (\ref{dynsyst}) states that the rate of variation of the particle density on a link $e$ outgoing from some non-destination node $v$ is given by the difference between $\lambda_v(t)G^v_e(\rho(t))$, i.e., the portion of the inflow at node $v$ which is routed to link $e$, and $f_e(t)$, i.e., the particle flow on link $e$.  Observe that the (distributed) routing policy $G^v(\rho)$ induces a (local) feedback which couples the dynamics of the particle flow on the the different links. 

We can now introduce the following notion of transfer efficiency of a dynamical flow network.\medskip  
\begin{definition}[Transfer efficiency of a dynamical flow network]\label{def:alphatransferring}
Consider a dynamical flow network $\mc N$ satisfying Assumptions \ref{ass:acyclicity} and \ref{ass:flowfunction}.
Given some flow vector $f^{\circ}\in\mc F$, and $\alpha\in[0,1]$, the dynamical flow network (\ref{dynsyst}) is said to be  \emph{$\alpha$-transferring} with respect to $f^{\circ}$ if the solution of (\ref{dynsyst}) with initial condition $\rho(0)=\mu^{-1}(f^{\circ})$ satisfies 
\be\label{alphatransferringdef}\liminf_{t\to+\infty}\lambda_n(t)\ge\alpha\lambda_0\,.\ee
\end{definition}
\medskip

Definition \ref{def:alphatransferring} states that a dynamical flow network is $\alpha$-transferring when the outflow is asymptotically not smaller than $\alpha$ times the inflow. In particular, a fully transferring dynamical flow network is characterized by the property of having outflow asymptotically equal to its inflow, so that there is no throughput loss. On the other hand, a partially transferring dynamical flow network might allow for some throughput loss, provided that some fraction of the flow is still guaranteed to be asymptotically transferred.

\begin{remark}\label{remark:standardflow}
Standard definitions in the literature are typically limited to static flow networks describing the particle flow at equilibrium via conservation of mass. In fact, they usually consist (see e.g., \cite{Ahuja.Magnanti.ea:93}) in the specification of a topology $\mc T$, a vector of flow capacities $f^{\max}\in\mc R$, and an admissible equilibrium flow vector $f^*\in\mc F^*(\lambda_0)$ for $\lambda_0<C(\mc N)$ (or, often, $f^*\in\cl(\mc F^*(\lambda_0))$ for $\lambda_0\le C(\mc N)$). In contrast, in our model, we focus on the off-equilibrium particle dynamics on a flow network $\mc N$, induced by a (distributed) routing policy $\mc G$. 
\end{remark}


\subsection{Examples}\label{sec:examples}

We now present three illustrative applications of the dynamical flow network framework. 

\begin{enumerate}
\item \textit{Transportation networks}:
In transportation networks, particles represent drivers and distributed routing policies correspond to their local route choice behavior in response to the locally observed link congestions. A desired route choice behavior from a social optimization perspective may be achieved by appropriate incentive mechanisms. While we do not address the issue of mechanism design in this paper,  the companion paper \cite{PartII} discusses the use of tolls in influencing the long-term global route choice behavior of drivers to get a desired initial equilibrium state for the network. The robust distributed routing policies designed in this paper would correspond to the \emph{ideal} node-wise route choice behavior of the drivers.
The flow function $\mu_e(\rho_e)$ presented in this paper is related to the notion of fundamental diagram in traffic theory, e.g., see \cite{Garavello.Piccoli:06}. Note that in our formulation, we assume that the density of drivers is homogeneous over a link. One can refer to \cite{Garavello.Piccoli:06} for models that incorporate inhomogeneity, although such models are developed under non-feedback routing policies. 

\item \textit{Data networks}:
In data networks, the particles represent data packets that are to be routed from sources to destinations by routers placed at the nodes (see, e.g., \cite[Ch.~5]{BertsekasGallager}). Typically the average packet delay from one router to the other increases with the increase in queue length on the link between the two routers. Hence, one has that such average delay is given by $d_e(\rho_e)$, where $d_e(\rho_e)$ is an increasing function. If one further assumes that the delay function $d_e(\rho_e)$ is concave and such that $\liminf_{\rho_e\to+\infty}d_e(\rho_e)/\rho_e>0$, then the relationship between the throughput and the queue length, $f_e\propto\rho_e/d_e(\rho_e)$, can be easily shown to satisfy Assumption \ref{ass:flowfunction}. Therefore, in analogy with the general framework, $\rho_e$ and $f_e$ denote the queue length and the throughput, respectively, and $\mu_e(\rho_e)$ represents the throughput functions on the links of data networks.


\item \textit{Production networks}:
In production networks, the particles represent goods that need to be processed by a series of production modules represented by nodes. It is known, e.g., see \cite{Karmarkar:89}, that the rate of doing work decreases with the amount of work in progress at a production module. This relationship is formalized by the concept of \emph{clearing functions}.
In this context, production networks have a clear analogy with our setup where $\rho_e$ represents the work-in-progress, $f_e$ represents the rate of doing work, and $\mu_e(\rho_e)$ represents the clearing function.

\end{enumerate}

\begin{remark}
While there are many examples of congestion-dependent throughput functions and clearing functions that satisfy Assumption~\ref{ass:flowfunction}, typical fundamental diagrams in transportation systems have a $\cap$-shaped profile. While we do not study the implications of this analytically, some simulations are provided in \cite{PartII} illustrating how the results of this paper could be extended to this case.
%
\end{remark}

\begin{remark}\label{remark2timescales}
It is worth stressing that, while distributed routing policies depend only on local information on the current congestion, their structural form may depend on some global information on the flow network which might have been accumulated through a slower time-scale evolutionary dynamics. A two time-scale process of this sort has been analyzed in our related work \cite{Como.Savla.ea:Wardrop-arxiv} in the context of transportation networks. Multiple time-scale dynamical processes have also been analyzed in \cite{Borkar.Kumar:03} in the context of communication networks.  
\end{remark}

\subsection{Perturbed dynamical flow networks and resilience} \label{sec:perturbations}
We shall consider persistent perturbations of the dynamical flow network (\ref{dynsyst}) that reduce the flow functions on the links, as per the following: 
\begin{definition}[Admissible perturbation]\label{def:admissibeperturbation}
An \emph{admissible perturbation} of a flow network $\mc N=(\mc T,\mu)$, satisfying Assumptions \ref{ass:acyclicity} and \ref{ass:flowfunction}, is a flow network $\tilde{\mc N}=({\mc T},\tilde\mu)$, with the same topology $\mc T$, and a family of perturbed flow functions $\tilde\mu:=\{\tilde\mu_e:\R_+\to\R_+\}_{e\in\mc E}$, such that, for every $e\in\mc E$, $\tilde\mu_e$ satisfies Assumption \ref{ass:flowfunction}, as well as
$$\tilde\mu_e(\rho_e)\le\mu_e(\rho_e)\,,\qquad \forall \rho_e\ge0\,.$$
We accordingly let $\newflowmaxe:=\sup\{\tilde{\mu}_e(\tilde{\rho}_e):\tilde{\rho}_e\ge0\}$.
The \emph{magnitude} of an admissible perturbation is defined as 
\be\label{deltadef}\delta:=\sum\nolimits_{e\in\mc E}\delta_e\,,\qquad\delta_e:=\sup\l\{\mu_e(\rho_e)-\tilde\mu_e(\rho_e):\,\rho_e\ge0\r\}\,.\ee 
The \emph{stretching coefficient} of an admissible perturbation is defined as 
\be\label{thetadef}\theta:=\max\{\tilderhomedian_e/\rhomedian_e:\,e\in\mc E\}\,,\ee 
where $\rhomedian_e$, and $\tilde{\rho}^{\mu}_e$ are the median densities associated to the unperturbed and the perturbed flow functions, respectively, on link $e\in\mc E$, as defined in \eqref{eq:median-def}. 
\end{definition}\medskip

Given a dynamical flow network as in Definition \ref{def:dynamicalflownetwork}, and an admissible perturbation as in Definition \ref{def:admissibeperturbation}, we shall consider the \emph{perturbed dynamical flow network}
\be\label{pertdynsyst}
\ds\frac{\de}{\de t}\tilde\rho_e(t)=\tilde\lambda_v(t)G^v_e(\tilde\rho(t))- \tilde f_e(t)\,,\qquad\forall\,0\le v<n\,,\quad\forall\,e\in\mc E^+_v\,,\ee
where
\be\tilde f_e(t):=\tilde\mu_e(\tilde\rho_e(t))\,,\qquad \tilde\lambda_v(t):=\l\{\ba{lcl}\sum_{e\in\mc E^-_v}\tilde f_e(t)&\se&0<v < n\\\lambda_0&\se&v=0\,.\ea\r.
\ee
Observe that the perturbed dynamical flow network (\ref{pertdynsyst}) has the same structure of the original dynamical flow network (\ref{dynsyst}), as it describes the rate of variation of the particle density on each link $e$ outgoing from some non-destination node $v$ as the difference between $\tilde\lambda_v(t)G^v_e(\tilde\rho(t))$, i.e., the portion of the perturbed inflow at node $v$ which is routed to link $e$, minus the perturbed flow on link $e$ itself. Notice that the only difference with respect to the original dynamical flow network (\ref{dynsyst}) is in the perturbed flow function $\tilde\mu_e(\rho_e)$ on each link $e\in\mc E$, which replaces the original one, $\mu_e(\rho_e)$. In particular, the (distributed) routing policy $\mc G$ is the same for the unperturbed and the perturbed dynamical flow networks. In this way, we model a situation in which the routers are not aware of the fact that the flow network has been perturbed, but react to this change only indirectly, in response to variations of the local density vectors $\tilde\rho^v(t)$. 

We are now ready to define the following notion of resilience of a dynamical flow network as in Definition \ref{def:dynamicalflownetwork} with respect to an initial flow.
\medskip
\begin{definition}[Resilience of a dynamical flow network]\label{def:stabilitymargins}
Let $\mc N$ be a flow network satisfying Assumptions \ref{ass:acyclicity} and \ref{ass:flowfunction}, $\mc G$ be a distributed routing policy, and $\lambda_0\ge0$ be a constant inflow at the origin node. Given $\alpha\in(0,1]$, $\theta\ge1$ and $f^{\circ} \in \mc F$, let $\gamma_{\alpha,\theta}(f^{\circ},\mc G)$ be equal to the infimum magnitude of all the admissible perturbations of stretching coefficient less than or equal to $\theta$ for which the perturbed dynamical flow network (\ref{pertdynsyst}) is not $\alpha$-transferring with respect to $f^{\circ}$. Also, define $$\gamma_{0,\theta}(f^{\circ},\mc G):=\lim_{\alpha\downarrow0}\gamma_{\alpha,\theta}(f^{\circ},\mc G)\,.$$ For $\alpha\in[0,1]$, the \emph{$\alpha$-resilience} with respect to $f^{\circ}$ is defined as\footnote{It is easily seen that the limits involved in this definition always exist, as $\gamma_{\alpha,\theta}(f^{\circ},\mc G)$ is clearly nonincreasing in $\alpha$ (the higher $\alpha$, the more stringent the requirement of $\alpha$-transfer) and $\theta$ (the higher $\theta$, the more admissible perturbations are considered that may potentially make the dynamical flow network to be not $\alpha$-transferring).} 
$$\gamma_{\alpha}(f^{\circ},\mc G):=\lim_{\theta\to+\infty}\gamma_{\alpha,\theta}(f^{\circ},\mc G)\,.$$ The $1$-resilience will be referred to as the \emph{strong resilience}, while the $0$-resilience will be referred to as the \emph{weak resilience}.
 \end{definition}\medskip
 \begin{remark}[Zero-sum game interpretation]
 The notions of resilience are with respect to adversarial perturbations. Therefore, one can provide a zero-sum game interpretation as follows. Let the strategy space of the system planner be the class of distributed routing policies and the strategy space of an adversary be the set of admissible perturbations. Let the utility function of the adversary be $M \Theta - \delta$, where $M$ is a large quantity, e.g., $\sum_{e \in \mc E} \flowmaxe$, and $\Theta$ takes the value $1$ if the network is not $\alpha$-transferring under given strategies of the system planner and the adversary, and zero otherwise. Let the utility function of the system planner be $\delta - M \Theta$. As stated in Sectio	n \ref{sec:comparison}, a certain class of \emph{locally responsive} distributed routing policies characterized by Definition \ref{def:myopicpolicy}, is maximally robust with respect to the notions of weak and strong resilience. This will then show that the locally responsive distributed routing policies correspond to approximate Nash equilibria in this zero-sum game setting.
 \end{remark}\medskip 
 
In the remainder of the paper, we shall focus on the characterization of the weak resilience of dynamical flow networks, while the strong resilience will be addressed in the companion paper \cite{PartII}. Before proceeding, let us elaborate a bit on Definition \ref{def:stabilitymargins}. Notice that, for every $\alpha\in(0,1]$, the $\alpha$-resilience $\gamma_{\alpha}(f^{\circ},\mc G)$ is simply the infimum magnitude of all the admissible perturbations such that the perturbed dynamical network (\ref{pertdynsyst}) is not $\alpha$-transferring with respect to the equilibrium flow $f^{\circ}$. In fact, one might think of $\gamma_{\alpha}(f^{\circ},\mc G)$ as the minimum effort required by a hypothetical adversary in order to modify the dynamical flow network  from (\ref{dynsyst}) to (\ref{pertdynsyst}), and make it not $\alpha$-transferring, provided that such an effort is measured in terms of the magnitude of the perturbation $\delta=\sum_{e \in \mc E} ||\mu_e(\,\cdot\,)-\tilde\mu_e(\,\cdot\,)||_{\infty}$. For $\alpha=0$, trivially the perturbed network flow is always $0$-transferring with respect to any initial flow. For this reason, the definition of the weak resilience $\gamma_0(f^{\circ},\mc G)$ involves the double limit $\lim_{\theta\to+\infty}\lim_{\alpha\downarrow0}\gamma_{\alpha,\theta}(f^{\circ},\mc G)$: the introduction of the bound on the stretching coefficient of the admissible perturbation is a mere technicality whose necessity will become clear in Section \ref{sec:proof2}. 

We conclude this section with the following result, providing an upper bound on the weak resilience of a dynamical flow network driven by any, not necessarily distributed, routing policy $\mc G$, in terms of the min-cut capacity of the network. Tightness of this bound will follow from Theorem \ref{maintheo-weakstability} in Section \ref{sec:comparison}, which will show that, for a particular class of locally responsive distributed routing policies, the dynamical flow network has weak resilience equal to the min-cut capacity. 

\begin{proposition}\label{propUB}
Let $\mc N$ be a flow network satisfying Assumptions \ref{ass:acyclicity} and \ref{ass:flowfunction}, $\lambda_0>0$ a constant inflow, and $\mc G$ an arbitrary routing policy. Then, for any initial flow $f^{\circ}$, the weak resilience of the associated dynamical flow network satisfies 
$$\gamma_0(f^{\circ},\mc G)\le C(\mc N)\,.$$
\end{proposition}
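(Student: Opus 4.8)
The plan is to exhibit, for every $\alpha\in(0,1]$, an admissible perturbation of magnitude strictly less than $C(\mc N)$ and of stretching coefficient equal to $1$ under which the perturbed dynamical flow network \eqref{pertdynsyst} fails to be $\alpha$-transferring with respect to $f^{\circ}$. This will force $\gamma_{\alpha,\theta}(f^{\circ},\mc G)<C(\mc N)$ for every $\alpha\in(0,1]$ and every $\theta\ge1$, and letting $\alpha\downarrow0$ and then $\theta\to+\infty$ in Definition \ref{def:stabilitymargins} yields $\gamma_0(f^{\circ},\mc G)\le C(\mc N)$. To construct the perturbation, fix an origin-destination cut $\mc U$ attaining the minimum in \eqref{def:capacity}, set $\mc W:=\mc V\setminus\mc U$ (so $0\in\mc U$ and $n\in\mc W$), fix a small $\eta>0$, and scale down only the cut links: let $\tilde\mu_e:=c_e\mu_e$ for $e\in\mc E^+_{\mc U}$, with $c_e\in(0,1)$ chosen so that $\sum_{e\in\mc E^+_{\mc U}}c_e\flowmaxe=\eta$, and $\tilde\mu_e:=\mu_e$ otherwise. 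Each $\tilde\mu_e$ still satisfies Assumption \ref{ass:flowfunction}, with $\newflowmaxe=c_e\flowmaxe$ on the cut links; since $c_e\mu_e(\rho)=\newflowmaxe/2$ exactly when $\mu_e(\rho)=\flowmaxe/2$, the perturbed and unperturbed median densities coincide on every link, so the stretching coefficient is $\theta=1$; and the magnitude is $\delta=\sum_{e\in\mc E^+_{\mc U}}(1-c_e)\flowmaxe=C(\mc N)-\eta<C(\mc N)$.

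The crux is to show that under this perturbation $\liminf_{t\to\infty}\tilde\lambda_n(t)\le\eta$. Consider the total density carried by links leaving the downstream set, $M_{\mc W}(t):=\sum_{e:\,\mathrm{tail}(e)\in\mc W}\tilde\rho_e(t)$. Summing \eqref{pertdynsyst} over these links and using $\sum_{e\in\mc E^+_v}G^v_e\equiv1$ together with $\tilde\lambda_v=\sum_{e\in\mc E^-_v}\tilde f_e$ for each $v\in\mc W\setminus\{n\}$ (legitimate since $0\in\mc U$ and $n$ is the only destination), the flows internal to $\mc W$ cancel and one is left with the mass-balance identity
\be
\frac{\de}{\de t}M_{\mc W}(t)=\sum_{e\in\mc E^+_{\mc U}}\tilde f_e(t)-\tilde\lambda_n(t)-\sum_{\substack{e\,:\ \mathrm{tail}(e)\in\mc W\\ \mathrm{head}(e)\in\mc U}}\tilde f_e(t)\,.
\ee
Dropping the last (nonnegative) term and bounding $\tilde f_e(t)\le\newflowmaxe$ on the cut gives $\tilde\lambda_n(t)\le\eta-\tfrac{\de}{\de t}M_{\mc W}(t)$; integrating on $[0,T]$ and using $M_{\mc W}(T)\ge0$ gives $\tfrac1T\int_0^T\tilde\lambda_n(t)\,\de t\le\eta+M_{\mc W}(0)/T$, whence, since $\tilde\lambda_n\ge0$, letting $T\to+\infty$ gives $\liminf_{t\to\infty}\tilde\lambda_n(t)\le\eta$. (Global existence of the solution poses no problem: $\tilde\lambda_v(t)\le\sum_{e\in\mc E^-_v}\newflowmaxe$ is bounded uniformly in $t$, so by \eqref{pertdynsyst} the densities grow at most linearly.)

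To conclude, given $\alpha\in(0,1]$ choose $\eta$ small enough that $\eta<\alpha\lambda_0$ (and $c_e\in(0,1)$ on every cut link): then $\liminf_{t\to\infty}\tilde\lambda_n(t)\le\eta<\alpha\lambda_0$, so the perturbed network is not $\alpha$-transferring with respect to $f^{\circ}$, while the perturbation has magnitude $C(\mc N)-\eta$ and stretching coefficient $1$; hence $\gamma_{\alpha,\theta}(f^{\circ},\mc G)\le C(\mc N)-\eta$ for all $\theta\ge1$, and the conclusion follows as indicated above. I expect the one genuinely delicate point to be the bookkeeping behind the mass-balance identity — separating the cut links, the links internal to $\mc W$, and the backward links from $\mc W$ to $\mc U$, and verifying that the internal terms cancel exactly; the construction of the perturbation, the computation of its magnitude and stretching coefficient, and the averaging estimate are all routine.
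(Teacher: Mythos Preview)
Your proof is correct and follows essentially the same approach as the paper: scale down the flow functions on a minimal cut by a constant factor (giving stretching coefficient $1$ and magnitude just below $C(\mc N)$), then use a mass-balance identity for the total density on links leaving $\mc W=\mc V\setminus\mc U$ to bound the outflow $\tilde\lambda_n$ by the perturbed cut capacity. The only cosmetic differences are that the paper takes a single scaling factor $\eps=\alpha\lambda_0/(2C(\mc N))$ on all cut links and concludes by a contradiction argument (showing $\zeta(t)\to-\infty$ if $\liminf\tilde\lambda_n\ge\alpha\lambda_0$), whereas you allow link-dependent $c_e$, conclude via time-averaging, and keep the backward $\mc W\to\mc U$ term explicit before dropping it.
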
 
\begin{proof}
We shall prove that, for every $\alpha\in(0,1]$, and every $\theta\ge1$, 
\be\label{gammaalphatheta}\gamma_{\alpha,\theta}(f^{\circ},\mc G)\le C(\mc N)-\frac{\alpha}2\lambda_0\,.\ee 
Observe that (\ref{gammaalphatheta}) immediately implies that $$\gamma_0(f^{\circ},\mc G)=\lim_{\theta\to+\infty}\lim_{\alpha\downarrow0}\gamma_{\alpha,\theta}(f^{\circ},\mc G)\le \lim_{\theta\to+\infty}\lim_{\alpha\downarrow0} \left (C(\mc N)-\alpha\lambda_0/2 \right)=C(\mc N)\,,$$ 
thus proving the claim. 

Consider a minimal origin-destination cut, i.e., some $\mc U\subseteq\mc V$ such that $0\in\mc U$, $n\notin\mc U$, and $\sum_{e\in\mc E^+_{\mc U}}f^{\max}_e=C(\mc N)$. Define $\eps:=\alpha\lambda_0/(2C(\mc N))$, and consider an admissible perturbation such that $\tilde\mu_e(\rho_e)=\eps\mu_e(\rho_e)$ for every $e\in\mc E_{\mc U}^+$, and $\tilde\mu_e(\rho_e)=\mu_e(\rho_e)$ for all $e\in\mc E\setminus\mc E_{\mc U}^+$. It is readily verified that the magnitude of such perturbation satisfies 
$$\delta=(1-\eps)\sum\nolimits_{e\in\mc E^+_{\mc U}}f_e^{\max}=(1-\eps)C(\mc N)=C(\mc N)-\frac{\alpha}2\lambda_0\,,$$ 
while its stretching coefficient is $1$. 

Observe that 
\be\label{lambdaUbound} \tilde \lambda_{\mc U}(t):= \sum_{e \in \mc E_{\mc U}^+} \tilde f_e(t) \le\sum\nolimits_{e\in\mc E^+_{\mc U}}\tilde f_e^{\max}=\eps\sum\nolimits_{e\in\mc E^+_{\mc U}}f_e^{\max}=\alpha\lambda_0/2\,,\qquad t\ge0\,.\ee
Now, let $\mc W:=\mc V\setminus\mc U$ be the set of nodes on the destination side of the cut, and observe that 
\be\label{derhotiledet}\ba{rcl}
\ds\frac{\de}{\de t}\l(\sum\nolimits_{e\in\mc E^+_w}\tilde\rho_e(t)\r)&=&
\ds\sum\nolimits_{e\in\mc E^+_w}\l(\sum\nolimits_{j\in\mc E^-_w}\tilde f_j(t)\r)G^v_e(\tilde\rho(t))-\sum\nolimits_{e\in\mc E^+_w}\tilde f_e(t)\\[10pt]
&=&\ds\sum\nolimits_{e\in\mc E^-_w}\tilde f_e(t)-\sum\nolimits_{e\in\mc E^+_w}\tilde f_e(t)
\ea
\ee
Define $\mc A:=\cup_{w\in\mc W}\mc E^+_w$, $\mc B:=\cup_{w\in\mc W}\mc E^-_w$, and let $\zeta(t):=\sum_{e\in\mc A}\tilde \rho_e(t)$. From (\ref{derhotiledet}), the identity $\mc A\cup\mc E^+_{\mc U}=\mc B$, and (\ref{lambdaUbound}), one gets
\be\label{dezetadet}\ba{rcl}\ds\frac{\de}{\de t}\zeta(t)
&=&\ds\sum\nolimits_{w\in\mc W}\sum\nolimits_{e\in\mc E^+_w}\frac{\de}{\de t}\tilde \rho_e(t)
\\[10pt]&=&
\ds\sum\nolimits_{e\in\mc B}\tilde f_e(t)-\sum\nolimits_{e\in\mc E^-_n}\tilde f_e(t)-\sum\nolimits_{e\in\mc A}\tilde f_e(t)\\[10pt]&=&\ds\sum\nolimits_{e\in\mc E^+_{\mc U}}\tilde f_e(t)-\sum\nolimits_{e\in\mc E^-_n}\tilde f_e(t)\\[10pt]&<&\ds \alpha\lambda_0/2-\tilde\lambda_n(t)\,.
\ea\ee
Now assume, by contradiction, that 
$$\liminf_{t\to+\infty}\tilde\lambda_n(t)\ge\alpha\lambda_0\,.$$
Then, there would exist some $\tau\ge0$ such that $\tilde\lambda_n(t)\ge 3\alpha\lambda_0/4$ for all $t\ge\tau$. For all $t\ge\tau$, it would then follow from (\ref{dezetadet}) that $\de\zeta(t)/\de t\le-\alpha\lambda_0/4\,,$ so that  
$$\zeta(t)\le\zeta(\tau)+(t-\tau)\alpha\lambda_0/4$$
by Gronwall's inequality. Therefore, $\zeta(t)$ would converge to $-\infty$ as $t$ grows large, contradicting the fact that $\zeta(t)\ge0$ for all $t\ge0$.
Then, necessarily $$\liminf_{t\to+\infty}\tilde\lambda_n(t)<\alpha\lambda_0\,,$$ so that the perturbed dynamical network is not $\alpha$-transferring. This implies (\ref{gammaalphatheta}), and therefore the claim.  \end{proof}

\section{Main results and discussion} 
\label{sec:comparison}

In this paper, we shall be concerned with a family of \emph{maximally robust} distributed routing policies. Such a family is characterized by the following:
\begin{definition}[Locally responsive distributed routing policy]\label{def:myopicpolicy}
A \emph{locally responsive} distributed routing policy for a flow network topology $\mc T=(\mc V,\mc E)$ with node set $\mc V=\{0,1,\ldots,n\}$ is a family of continuously differentiable distributed routing functions $\mc G=\{G^v:\mc R_v\to\mc S_v\}_{v\in\mc V}$ such that, for every non-destination node $0\le v<n$:
\begin{description}
\item[(a)]
$\ds\frac{\partial}{\partial \rho_e}G^v_j(\rho^v)\ge0 \,,\qquad \forall j,e\in\mc E^+_v\,,  j\ne e\,,\rho^v\in\mc R^v\,;$
\item[(b)]
for every nonempty proper subset $\mc J\subsetneq\mc E^+_v$, there exists a continuously differentiable map $G^{\mc J}:\mc R_{\mc J}\to\mc S_{\mc J}$, where $\mc R_{\mc J}:=\R_+^{\mc J}$, and $\mc S_{\mc J}:=\{p\in\mc R_{\mc J}:\,\sum_{j\in\mc J}p_j=1\}$ is the simplex of probability vectors over $\mc J$, such that, for every $\rho^{\mc J}\in\mc R_{\mc J}$, if $$\rho^v_e\to+\infty\,,\ \ \forall e\in\mc E^+_v\setminus\mc J\,,\qquad\rho^v_j\to\rho_j^{\mc J}\,,\ \ \forall j\in\mc J\,,$$ then  $$G^v_e(\rho^v)\to0,\ \ \forall e\in\mc E^+_v\setminus\mc J\,,\qquad G^v_j(\rho^v)\to G^{\mc J}_j(\rho^{\mc J}),\ \ \forall j\in\mc J\,.$$
\end{description}
\end{definition}

Property (a) in Definition \ref{def:myopicpolicy} states that, as the particle density on an outgoing link $e\in\mc E^+_v$ increases while the particle density on all the other outgoing links remains constant, the fraction of inflow at node $v$ routed to any link $j\in\mc E^+_v\setminus\{e\}$ does not decrease, and hence the fraction of inflow routed to link $e$ itself does not increase. In fact, Property (a) in Definition~\ref{def:myopicpolicy} is reminiscent of Hirsch's notion of \emph{cooperative dynamical systems} \cite{Hirsch:82,Hirsch:85}. On the other hand, Property (b) implies that the fraction of incoming particle flow routed to a subset of outgoing links $\mc K\subset\mc E^+_v$ vanishes as the density on links in $\mc K$ grows unbounded while the density on the remaining outgoing links remains bounded. It is worth observing that, when the routing policy models some selfish behavior of the particles (e.g., in transportation networks), then Property (a) and (b) are very natural assumptions on such behavior as they capture some sort of greedy local minimization of the delay.

\begin{example}[Locally responsive distributed routing policy]
\label{example:routing}
Let $\ilogitconst_v$, for $0\le v<n$, and $a_e$, for $e\in\mc E$, be positive constants. Define the routing policy $\mc G$ by
\begin{equation}
\label{eq:routing-example}
G^v_e(\rho)=\frac{a_e \exp(- \ilogitconst_v\rho_e)}{\sum_{j \in \mc E_v^+} a_j \exp(- \ilogitconst_v \rho_j)}\,, \qquad \forall e \in \mc E_v^+\,, \quad \forall 0\le v<n\,.
\end{equation} Clearly, $\mc G$ is distributed, as it uses only information on the particle density on the links outgoing from a node $v$ in order to compute how the inflow at node $v$ gets split among its outgoing links. Moreover, for all $0\le v<n$, and $e\in\mc E^+_v$, $G^v_e(\rho)$ is clearly differentiable, and computing partial derivatives one gets 
\begin{equation}
\label{eq:routing-example-partialder}
\frac{\partial}{\partial\rho_j}G^v_e(\rho) =\ilogitconst_v \frac{a_e a_j \exp(-\ilogitconst_v\rho_e) \exp(-\ilogitconst_v\rho_j)}{\left(\sum_{i \in \mc E_v^+} \alpha_i \exp(- \ilogitconst_v\rho_i) \right)^2}\ge0 \qquad \forall j \in \mc E_v^+, \quad j \neq e\,,
\end{equation}
and $\frac{\partial}{\partial\rho_j}G_e(\rho) =0$ for all $j\in\mc E\setminus\mc E^+_v$.
This implies that Property (a) of Definition~\ref{def:myopicpolicy} holds true. Property (b) is also easily verified. Therefore, $\mc G$ is a locally responsive distributed routing policy. In the context of transportation networks, the example in \eqref{eq:routing-example} is a variant of the logit function from discrete choice theory emerging from utilization maximization perspective of drivers, where the utility associated with link $e$ is the sum of $-\rho_e+\log a_e/\eta_v$ and a double exponential random variable with parameter $\eta_v$ (see, e.g., \cite{BenAkiva.Lerman:85}). 
\end{example}\medskip

We are now ready to state our main results. The first one shows that, when the distributed routing policy $\mc G$ is locally responsive, the dynamical flow network (\ref{dynsyst}) always admits a unique, globally attractive limit flow vector. 
\smallskip
\begin{theorem}[Existence of a globally attractive limit flow under locally responsive routing policies]\label{thm:uniquelimitflow}
Let $\mc N$ be a flow network satisfying Assumptions \ref{ass:acyclicity} and \ref{ass:flowfunction}, $\lambda_0\ge0$ a constant inflow, and $\mc G$ a locally responsive distributed routing policy. Then, there exists a unique limit flow $f^*\in\cl(\mc F)$ such that, for every initial condition $\rho(0)\in\mc R$, the dynamical flow network (\ref{dynsyst}) satisfies 
$$\lim_{t\to+\infty}f(t)=f^*\,.$$
Moreover, the limit flow $f^*$ is such that, if $f_e^*=f_e^{\max}$ for some link $e\in\mc E^+_v$ outgoing from a nondestination node $0\le v<n$, then $f_e^*=f_e^{\max}$ for every outgoing link $e\in\mc E^+_v$. 
\end{theorem}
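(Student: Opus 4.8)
The plan is to exploit the structure of the DAG. Thanks to the topological ordering (\ref{vertexordering}), the dynamics (\ref{dynsyst}) of the links outgoing from a node $v$ depend only on $\lambda_v(t)$ (the inflow into $v$) and on the local density vector $\rho^v$; and $\lambda_v(t)$ is itself a sum of flows $f_e(t)=\mu_e(\rho_e(t))$ on links $e\in\mc E^-_v$, each of which is outgoing from some node $u<v$. So I would argue \emph{by induction on the topological order of the nodes}. The induction hypothesis at stage $v$ is: for every link $e$ outgoing from a node $u<v$, $f_e(t)$ converges to a limit $f_e^*\in[0,f_e^{\max}]$ as $t\to+\infty$. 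The base case $v=1$ is the node $0$, fed by the constant inflow $\lambda_0$, so I would first treat a single node in isolation.

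The core lemma is therefore: \emph{given a single non-destination node $v$ with a constant (or asymptotically constant) inflow $\lambda\ge0$ and a locally responsive routing policy $G^v$, the local ODE $\dot\rho_e=\lambda G^v_e(\rho^v)-\mu_e(\rho_e)$, $e\in\mc E^+_v$, has a globally attractive equilibrium, and if one outgoing link saturates at $f^{\max}$ then all do.} To prove this I would use the cooperativity built into Property~(a): writing the system in terms of the densities $\rho_e$, the off-diagonal entries of the Jacobian are $\lambda\,\partial_{\rho_j}G^v_e\ge0$ (Property (a)), so the single-node system is a cooperative (monotone) dynamical system in the sense of Hirsch. Hirsch's theory then gives that almost every trajectory converges to the set of equilibria, and on a strongly connected cooperative system generic trajectories converge to equilibria; one needs to upgrade this to a \emph{unique, globally attractive} equilibrium. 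I would do that by showing the equilibrium is unique: an equilibrium $f^*$ of the single-node system must satisfy $f_e^* = \lambda G^v_e(\mu^{-1}(f^*))$ for all $e$, i.e. $f^*$ is a fixed point of a monotone self-map of the box $\prod_e[0,f_e^{\max}]$, and I would invoke monotonicity plus the fact that $\sum_e f_e^* = \min\{\lambda,\,\text{something}\}$ together with Property~(b) to rule out multiple fixed points. The ``all-or-nothing saturation'' statement is exactly where Property~(b) enters: if some $f_e^*=f_e^{\max}$ then $\rho_e^*=+\infty$; Property~(b), applied with $\mc J$ the set of non-saturated links, forces $G^v_e$ on the saturated links to vanish, contradicting $f_e^* = \lambda G^v_e = f_e^{\max}>0$ unless \emph{every} outgoing link is saturated (or $\lambda$ is itself at capacity). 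Boundedness of trajectories, hence existence of at least one equilibrium, follows because $\sum_e\dot\rho_e = \lambda - \sum_e\mu_e(\rho_e)$ is negative once $\sum_e\rho_e$ is large (as $\sum_e\mu_e(\rho_e)\to\sum_e f_e^{\max}\ge$, wait: this needs $\lambda<\sum_e f_e^{\max}$; if $\lambda\ge\sum_e f_e^{\max}$ all links saturate and the flow still converges, to the all-$f^{\max}$ vector, handled separately).

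Having the single-node lemma, the induction step is a standard ``cascade of asymptotically autonomous systems'' argument: by the induction hypothesis $\lambda_v(t)=\sum_{e\in\mc E^-_v}\mu_e(\rho_e(t))\to\lambda_v^*:=\sum_{e\in\mc E^-_v}f_e^*$, so the local system at $v$ is an asymptotically autonomous perturbation of the single-node system with constant inflow $\lambda_v^*$. Since the limiting system has a \emph{globally} attractive equilibrium, results on asymptotically autonomous systems (e.g. Markus, or Thieme) give that the trajectory $\rho^v(t)$ converges to that equilibrium, hence $f_e(t)\to f_e^*$ for all $e\in\mc E^+_v$; this closes the induction and yields the unique global limit flow $f^*\in\cl(\mc F)$, with the stated all-or-nothing property on each node inherited directly from the single-node lemma.

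The main obstacle I anticipate is the single-node lemma, and specifically upgrading Hirsch's \emph{generic} convergence for cooperative systems to \emph{global} convergence to a \emph{unique} equilibrium — Property~(a) alone gives only cooperativity, not irreducibility or uniqueness. I expect the real work is: (i) proving uniqueness of the single-node equilibrium using the contraction/monotonicity of the fixed-point map together with Property~(b); and (ii) handling the boundary behaviour carefully, i.e. the cases where some densities diverge to $+\infty$, which is where $\cl(\mc F)$ rather than $\mc F$ appears and where Property~(b) is indispensable. A secondary technical point is making the asymptotically-autonomous reduction rigorous when the limit equilibrium lies on the boundary (some $\rho_e^*=+\infty$), which may require rephrasing the convergence in terms of the flows $f_e$ rather than the densities $\rho_e$ throughout.
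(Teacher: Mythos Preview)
Your inductive architecture along the topological order is exactly what the paper does, and your use of Property~(b) for the ``all-or-nothing saturation'' is correct in spirit. The substantive divergence is in the single-node lemma, and this is also where your proposal has a real gap.

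You propose to invoke Hirsch's theory for cooperative systems, then separately argue uniqueness of the equilibrium, and finally upgrade generic convergence to global convergence. You yourself flag this upgrade as the main obstacle, and you do not actually carry it out: cooperativity alone (Property~(a)) gives neither irreducibility nor strong monotonicity, so Hirsch's results do not directly yield global attraction even when the equilibrium is unique. The paper explicitly sidesteps this entire issue. Rather than appealing to monotone-systems theory, it proves a key inequality (its Lemma~\ref{lem:coop-ext}): for any $\sigma,\varsigma\in\mc R_v$,
\[
\sum_{e\in\mc E^+_v}\sgn(\sigma_e-\varsigma_e)\bigl(G^v_e(\sigma)-G^v_e(\varsigma)\bigr)\le 0\,,
\]
which follows from Property~(a) by a path-integral argument. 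From this, for any two trajectories $\Phi^t(\sigma),\Phi^t(\varsigma)$ of the local system with the same constant input, one gets that $\chi(t):=\|\Phi^t(\sigma)-\Phi^t(\varsigma)\|_1$ satisfies $\chi(t)\le\chi(s)-\int_s^t\|\Psi^u(\sigma)-\Psi^u(\varsigma)\|_1\,\de u$, where $\Psi^t=\mu^v(\Phi^t)$. Barbalat's lemma then forces $\|\Psi^t(\sigma)-\Psi^t(\varsigma)\|_1\to0$; taking $\varsigma=\Phi^\tau(\sigma)$ shows each flow trajectory is Cauchy, hence convergent, and the limit is independent of the initial condition. This $\ell_1$-contraction argument delivers \emph{global} convergence and uniqueness simultaneously, with no detour through generic convergence or fixed-point uniqueness.

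For the induction step you invoke asymptotically-autonomous-systems results (Markus, Thieme). The paper again takes a more hands-on route: it proves a comparison lemma (if $\lambda^-(t)\le\lambda^+(t)$ then the corresponding solutions satisfy $\rho^-_e(t)\le\rho^+_e(t)$ componentwise), and then sandwiches the time-varying-input trajectory between two constant-input trajectories with inputs $\lambda\pm\eps$; continuity of $f^*(\lambda)$ in $\lambda$ (established via the implicit function theorem, using diagonal dominance of the Jacobian) closes the argument. Your approach here is workable in principle, but note that the limit equilibrium can lie at $\rho^*_e=+\infty$, and the standard asymptotically-autonomous theorems are stated on locally compact phase spaces; the paper's sandwich argument, phrased in terms of flows rather than densities, handles this boundary case without extra care.
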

\begin{proof}
See Section \ref{sec:proofthmuniquelimit}.
\end{proof} \medskip

Theorem \ref{thm:uniquelimitflow} states that, when the routing policy is distributed and locally responsive, there is a unique globally attractive limit flow $f^*$. Such a limit flow may be in $\mc F$, in which case it is not hard to see that it is necessarily an equilibrium flow, i.e., $f^*\in\mc F^*(\lambda_0)$; or belong to $\cl(\mc F)\setminus\mc F$, i.e., it satisfies the capacity constraint on one link with equality, in which case it is not an equilibrium flow. In the latter case, it satisfies the additional property that, on all the links outgoing from the same node, the capacity constraints are satisfied with equality. Such additional property will prove particularly useful in our companion paper \cite{PartII}, when characterizing the strong resilience of dynamical flow networks. As it will become clear in Section \ref{sec:proofthmuniquelimit}, the global convergence result mainly relies on Assumption \ref{ass:flowfunction} on monotonicity of the flow function, and Property (a) of Definition \ref{def:myopicpolicy} of locally responsive distributed routing policies, from which the dynamical flow network (\ref{dynsyst}) inherits a cooperative property. It is worth mentioning that we shall not use general results for cooperative dynamical systems \cite{Hirsch:82,Hirsch:85,Smith:95}, but rather exploit some other structural properties of (\ref{dynsyst}) which in fact allow us to prove stronger results. The additional property of the limit flow follows instead mainly from Property (b) of Definition \ref{def:myopicpolicy}.

\begin{figure}
\begin{center}
\subfigure[]{\includegraphics[width=7cm,height=5cm]{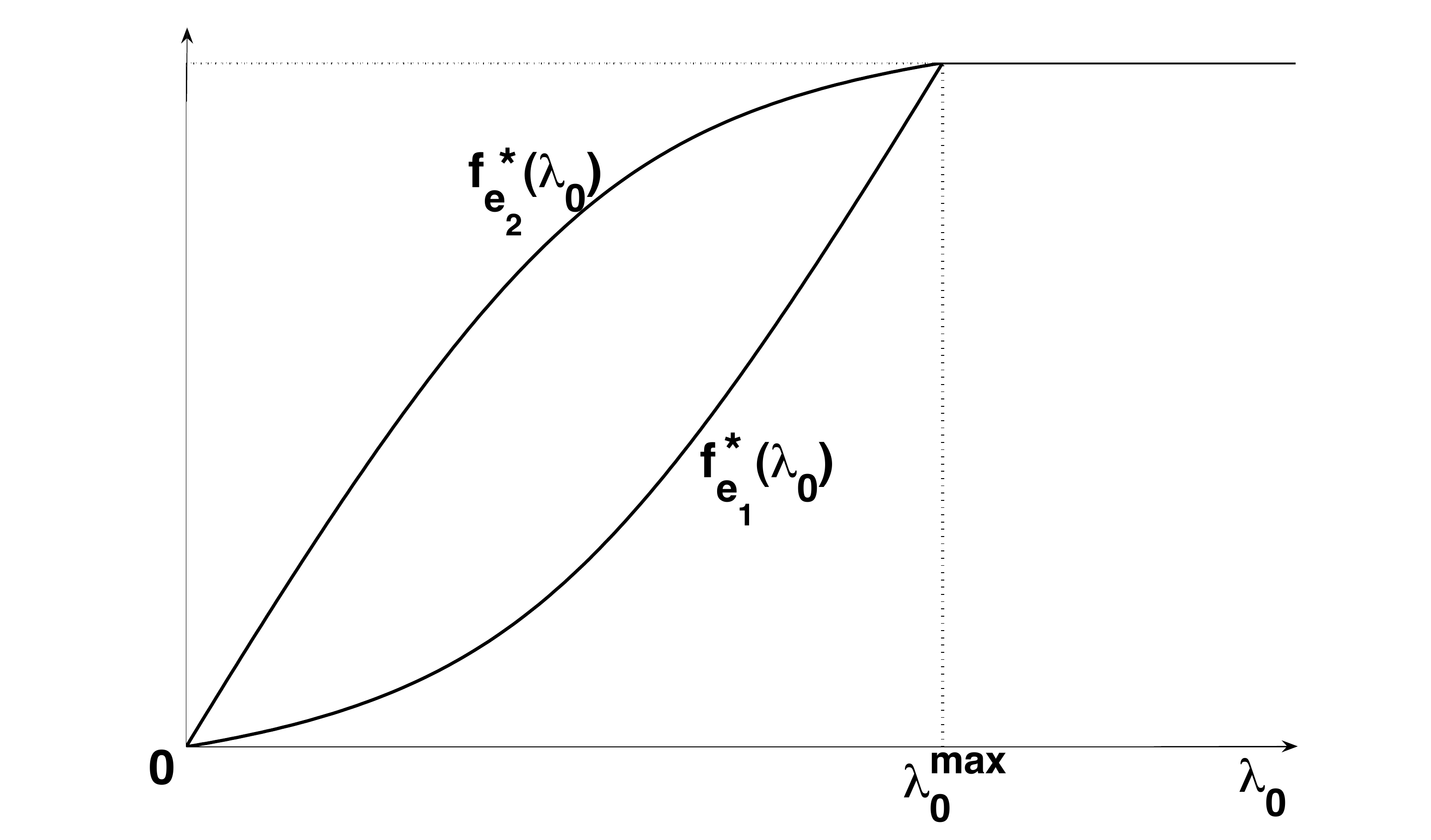}}\hspace{1cm}
\subfigure[]{\includegraphics[width=6cm,height=5cm]{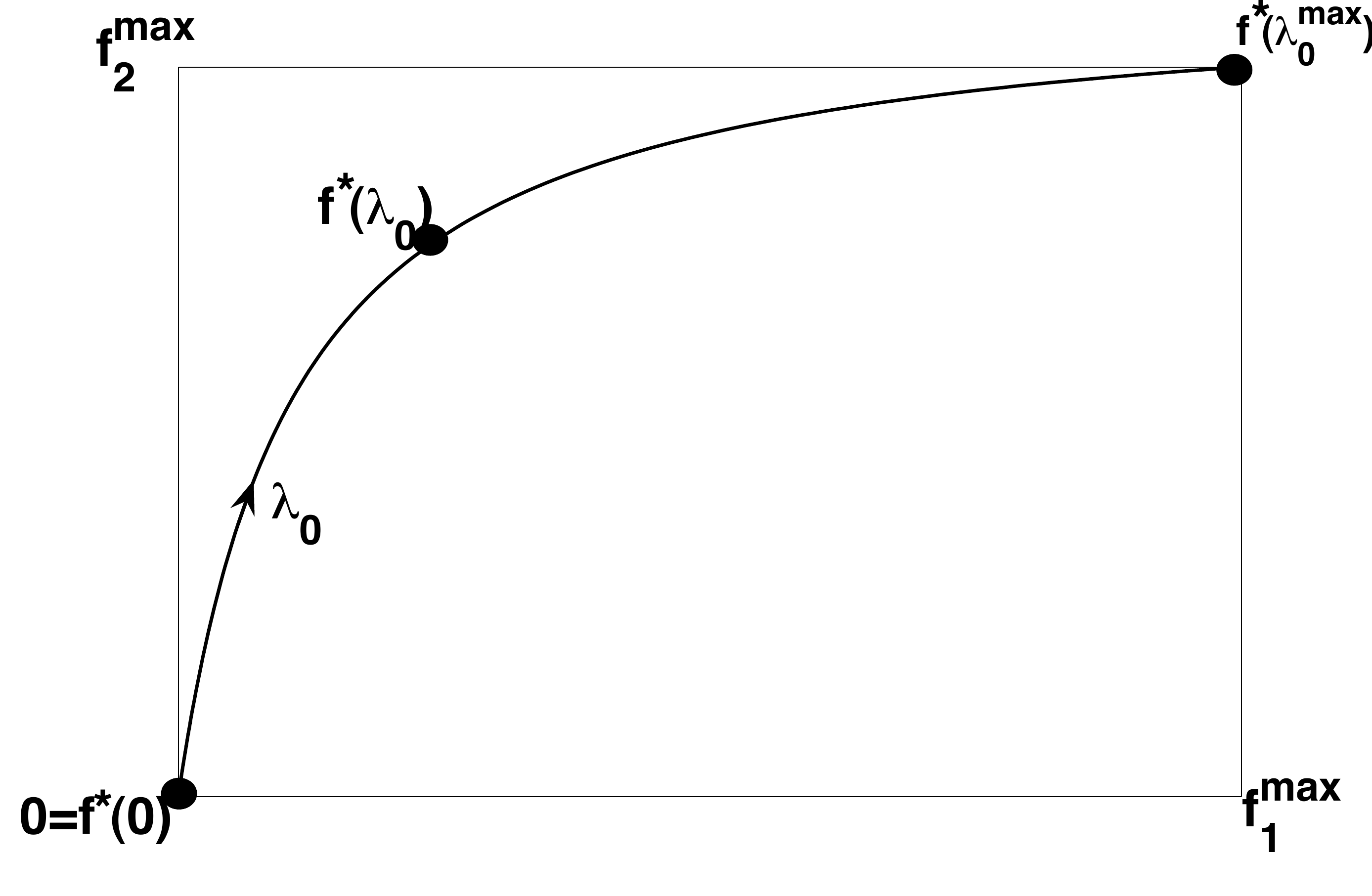}}
\end{center}
\caption{\label{fig:fstar}Dependence of the limit flow $f^*$ on the inflow $\lambda_0$ for the dynamical flow network of Example \ref{example:limitflow}. In (a), the two components of the limit flow, $f^*_{e_1}$ and $f^*_{e_2}$, are plotted as functions of the inflow $\lambda_0$. In (b), the curve of the limit flows is plotted in the $(f^*_{e_1},f^*_{e_2})$-plane. Observe as both components are increase from $0$ to $f_e^{\max}$, as $\lambda_0$ ranges between $0$ and $\lambda_0^{\max}$, while they remain constant at $f_e^{\max}$, as $\lambda_0$ varies above $\lambda_v^{\max}$. }
\end{figure}
\begin{figure}
\begin{center}
\subfigure[$\lambda_0=0$]{\includegraphics[width=5.4cm,height=4.8cm]{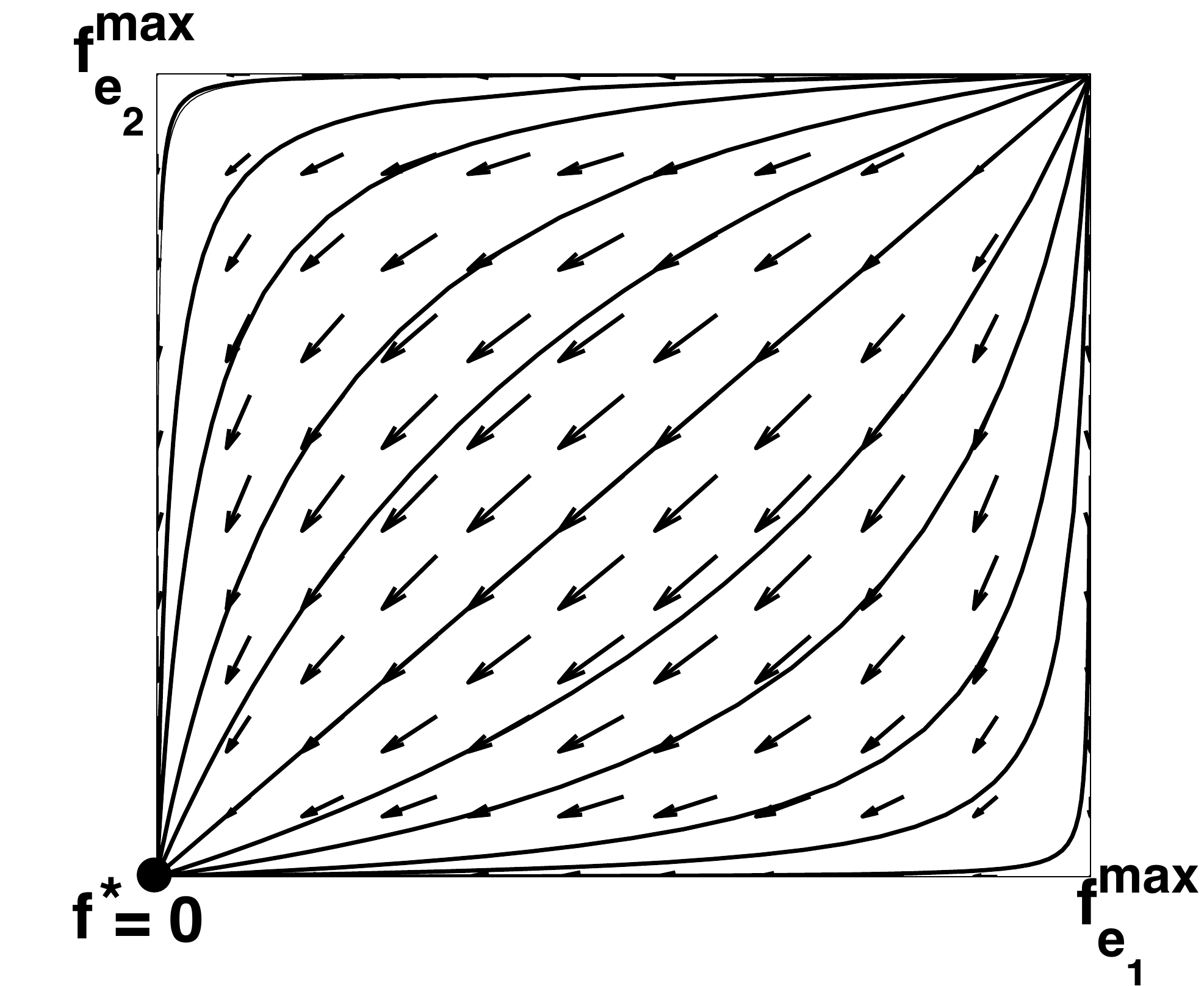}}
\subfigure[$\lambda_0=1$]{\includegraphics[width=5.4cm,height=4.8cm]{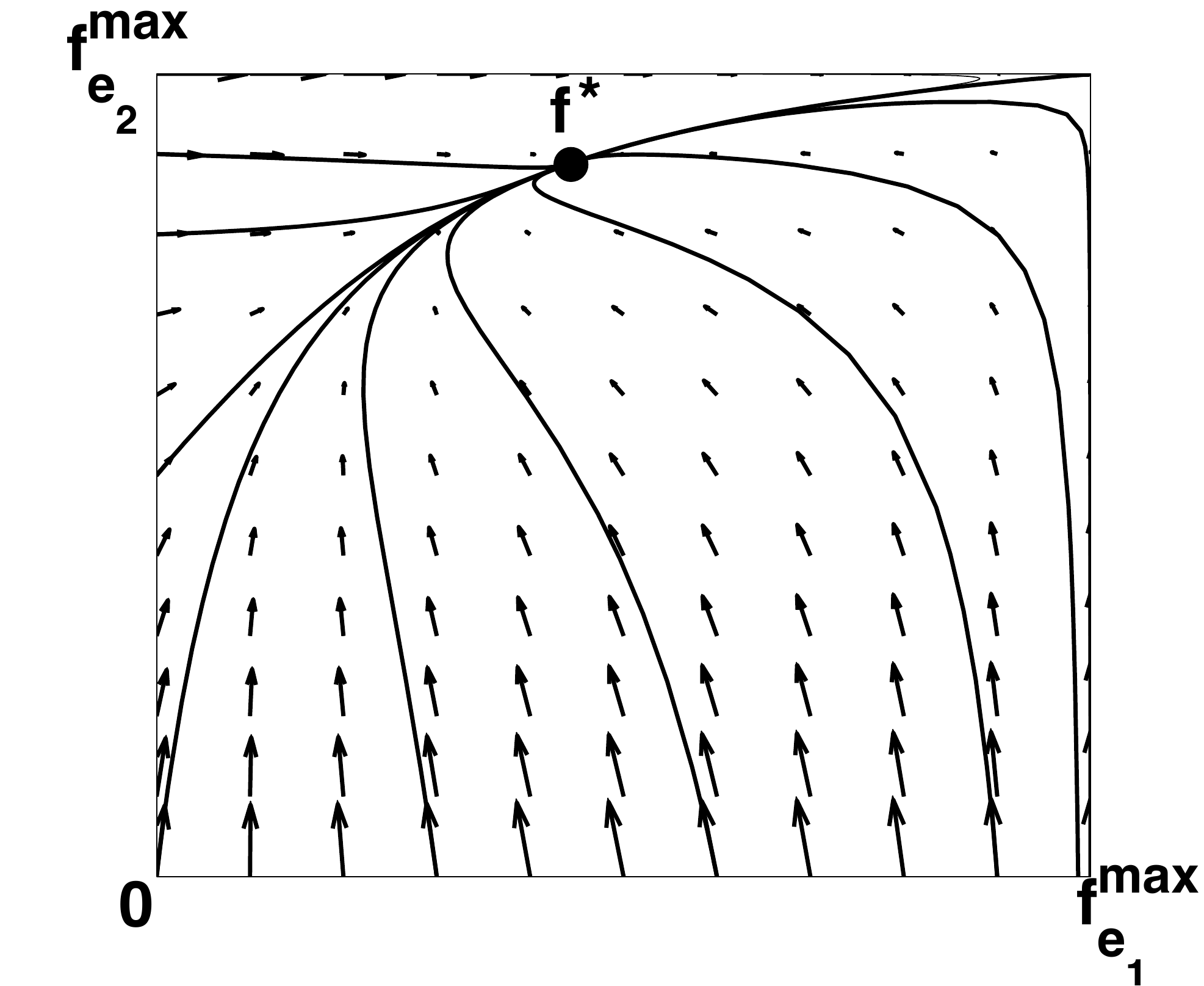}}
\subfigure[$\lambda_0=2$]{\includegraphics[width=5.4cm,height=4.8cm]{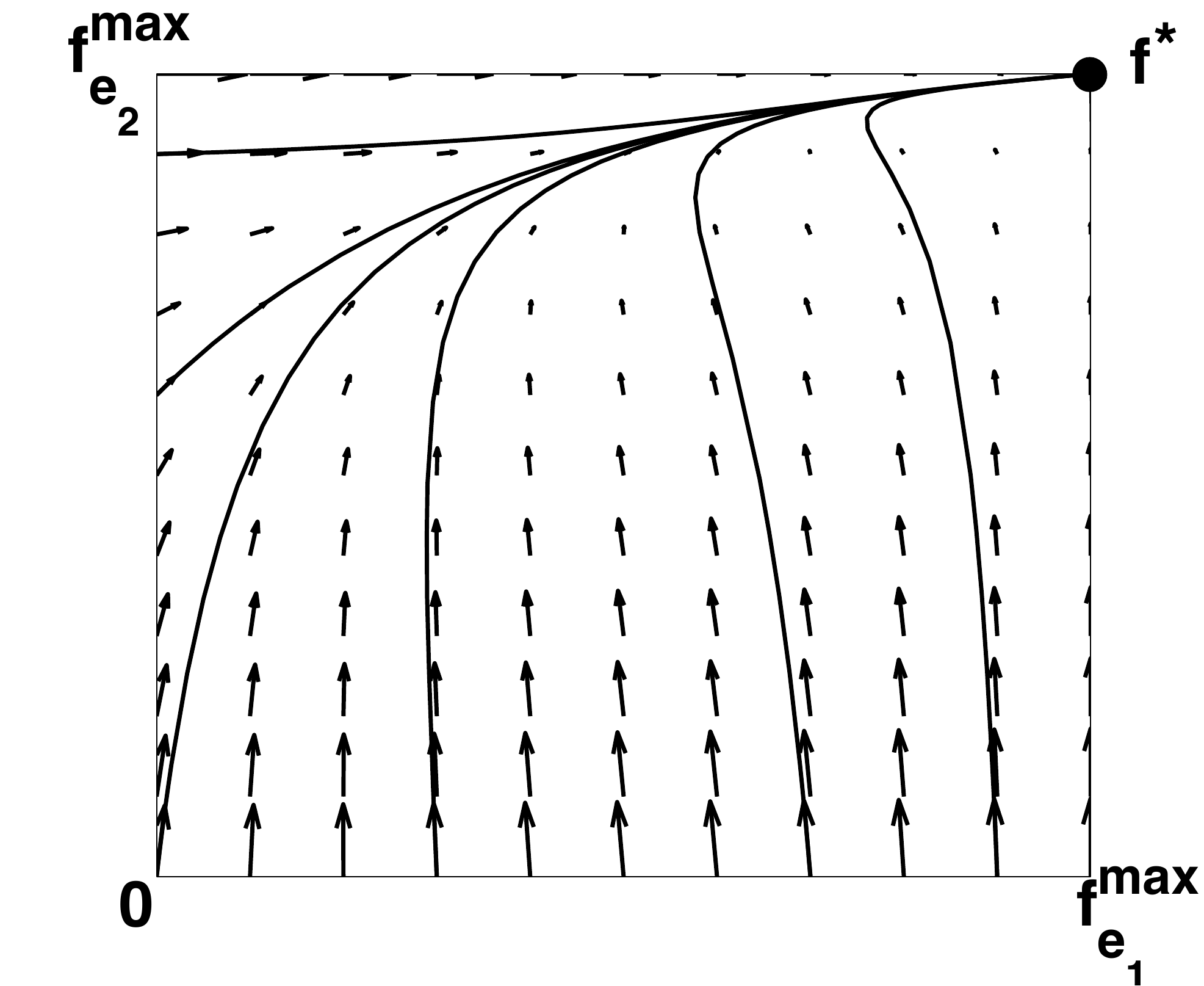}}
\end{center}
\caption{\label{fig:vectfield}Flow vector fields and flow trajectories for the dynamical flow network of Example \ref{example:limitflow}, for three values of the inflow. In the first two cases $\lambda_0<\lambda_0^{\max}$, and hence the limit flow $f^*$ is an equilibrium flow. In contrast, in the latter case, $\lambda_0\ge\lambda_0^{\max}$, and consequently $f^*$ is not an equilibrium flow and $f^*_{e_1}=f^{\max}_{e_1}$ and $f^{\max}_{e_2}=f^*_{e_2}$, as predicted by Theorem \ref{thm:uniquelimitflow}.}
\end{figure}

\begin{example}\label{example:limitflow}
Consider a simple topology containing just the origin and the destination node, i.e., with $\mc V=\{0,1\}$, and two parallel links $\mc E=\{e_1,e_2\}$. Assume that the flow functions on the two links are identical $\mu_{e_1}(\rho)=\mu_{e_2}(\rho)=3(1-e^{-\rho})/4$. Consider the routing policy 
$$G^0_{e_1}(\rho)=\frac{\frac35e^{-\rho_{e_1}}}{\frac35e^{-\rho_{e_1}}+6e^{-\rho_{e_2}}}\,,\qquad
G^0_{e_2}(\rho)=\frac{6e^{-\rho_{e_2}}}{\frac35e^{-\rho_{e_1}}+6e^{-\rho_{e_2}}}\,.$$
Then, the limit flow of the associated dynamical flow network can be explicitly computed for every constant inflow $\lambda_0\ge0$, and is given by 
$$f_1^*(\lambda_0)=\l\{\ba{lcl}
\l(12\lambda_0-11+\sqrt{(12\lambda_0-11)^2+28\lambda_0}\r)/{24}
&\text{ if }& 0\le\lambda_0<3/2\\
3/4&\text{ if }&\lambda_0\ge3/2\,,\ea\r.$$
$$f_2^*(\lambda_0)=\l\{\ba{lcl}
\l(12\lambda_0+11-\sqrt{(12\lambda_0-11)^2+28\lambda_0}\r)/{24}
&\text{ if }& 0\le\lambda_0<3/2\\
3/4&\text{ if }&\lambda_0\ge3/2\,.\ea\r.$$

Figure \ref{fig:fstar} shows the dependence of the limit flow $f^*$ on the inflow $\lambda_0$. The two components $f^*_{e_1}$, and $f^*_{e_2}$, increase from $0$ to $f_{e_1}^{\max}$, and, respectively, from $0$ to $f_{e_2}^{\max}$, as $\lambda_0$ ranges from $0$ to $\lambda_0^{\max}:=f^{\max}_{e_1}+f^{\max}_{e_2}$, while they remain constant as $\lambda_0$ varies above $\lambda_0^{\max}$. Figure \ref{fig:vectfield} reports the vector fields and flow trajectories associated to the dynamical flow network for three different values of the inflow, namely $\lambda_0=0$, $\lambda_0=1$, and $\lambda_0=2$. 
In the first two cases, $\lambda_0 < \lambda_0^{\max}$, and $f^*\in\mc F^*(\lambda_0)$ is an equilibrium flow; in the case (iii), $f^*\in\cl(\mc F^*(\lambda_0))\setminus\mc F^*(\lambda_0)$ is not an equilibrium flow. 
\end{example}\medskip

Our second main result, stated below, shows that locally responsive distributed routing policies are maximally robust, as the resilience of the induced dynamical flow network coincides with the min-cut capacity of the network. 

\begin{theorem}[Weak resilience for locally responsive distributed routing policies]
\label{maintheo-weakstability}
Let $\mc N$ be a flow network satisfying Assumptions \ref{ass:acyclicity} and \ref{ass:flowfunction}, $\lambda_0>0$ a constant inflow, and $\mc G$ a locally responsive distributed routing policy such that $G^v_e(\rho^v)>0$ for all $0\le v<n$, $e\in\mc E^+_v$, and $\rho^v\in\mc R_v$. Then, for every $f^{\circ}\in\mc F$, the associated dynamical flow network is partially transferring with respect to $f^{\circ}$ and has weak resilience
$$\gamma_0(f^{\circ},\mc G)=C (\mc N)\,.$$
\end{theorem}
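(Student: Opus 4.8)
By Proposition~\ref{propUB} the weak resilience is at most $C(\mc N)$, so the work lies entirely in the matching lower bound: I must show that \emph{every} admissible perturbation of magnitude $\delta < C(\mc N)$ leaves the perturbed dynamical flow network~\eqref{pertdynsyst} partially transferring, and that the unperturbed network ($\delta=0$) is partially transferring as well. The plan is to reduce this to a statement about the unique globally attractive limit flow $\tilde f^*$ of the perturbed network, which exists by Theorem~\ref{thm:uniquelimitflow} applied to $\tilde{\mc N}$ (the perturbed flow functions still satisfy Assumption~\ref{ass:flowfunction} and $\mc G$ is still locally responsive). Concretely, I would prove that $\sum_{e\in\mc E^-_n}\tilde f^*_e>0$ whenever $\delta<C(\mc N)$, and in fact that this quantity is bounded below by a positive constant depending only on the network and $\delta$; since $\lambda_n(t)\to\sum_{e\in\mc E^-_n}\tilde f^*_e$, partial transfer follows with $\alpha$ any value below $\bigl(\sum_{e\in\mc E^-_n}\tilde f^*_e\bigr)/\lambda_0$ (capped at $1$). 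Because this holds for all admissible perturbations with $\delta$ strictly below $C(\mc N)$ and for all stretching coefficients $\theta$, we get $\gamma_{0,\theta}(f^\circ,\mc G)\ge C(\mc N)$ for every $\theta$, hence $\gamma_0(f^\circ,\mc G)\ge C(\mc N)$.

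**Characterizing the limit flow.** The heart of the argument is an analysis of the limit flow $\tilde f^*$ of the perturbed network. Using the additional property from Theorem~\ref{thm:uniquelimitflow} — if $\tilde f^*_e=\newflowmaxe$ for one outgoing link of a node $v$, then $\tilde f^*_e=\newflowmaxe$ for \emph{all} outgoing links of $v$ — I would argue as follows. Let $\mc U^*$ be the set of non-destination nodes $v$ such that \emph{all} links in $\mc E^+_v$ are saturated at the limit flow, i.e. $\tilde f^*_e=\newflowmaxe$ for every $e\in\mc E^+_v$; if $\mc U^*=\emptyset$ then $\tilde f^*$ is an honest equilibrium flow in $\mc F^*(\lambda_0)$ (mass is conserved at every node, including $n$), so $\sum_{e\in\mc E^-_n}\tilde f^*_e=\lambda_0>0$ and we are done. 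Otherwise, I claim $\mc U^*\cup\{0\}$ together with its complement forms an origin–destination cut. Indeed $0$ can be adjoined to this set, $n\notin\mc U^*$, and one checks that the inflow $\lambda_0$ must be routed across the boundary of this set: by mass conservation at the unsaturated nodes (which behave as genuine equilibrium nodes) and since the strictly positive routing functions $G^v_e>0$ prevent any node from ``black-holing'' flow, the total inflow into the destination-side component equals the flow crossing the cut $\mc E^+_{\mc U^*\cup\{0\}}$, which in turn is bounded below by $\lambda_0$ minus what is absorbed — and absorption only happens at saturated nodes inside. A careful bookkeeping of mass balance, summing the steady-state version of~\eqref{pertdynsyst} over the origin-side component, should yield $\sum_{e\in\mc E^-_n}\tilde f^*_e = \lambda_0 - (\text{flow lost at saturation}) \ge \lambda_0 - \bigl(\sum_{e\in\mc E^+_{\mc U^*\cup\{0\}}}\newflowmaxe\bigr)+(\text{flow re-entering})$; the clean inequality I am really after is $\sum_{e\in\mc E^-_n}\tilde f^*_e\ge \lambda_0-\delta>0$ when $\delta<C(\mc N)\le\lambda_0$ — or, if $\lambda_0$ itself is small, $\sum_{e\in\mc E^-_n}\tilde f^*_e>0$ directly because then no cut can be saturated at all.

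**Main obstacle.** The delicate point is making rigorous the claim that the ``saturated'' nodes at the limit flow induce a cut whose capacity the perturbation must have essentially killed — in other words, that saturation propagates backward in a way that forces a genuine bottleneck. The property from Theorem~\ref{thm:uniquelimitflow} (all-or-nothing saturation at each node) is exactly what is needed to make $\mc U^*$ well behaved, but one still has to rule out pathological configurations where flow is lost at an interior saturated node without that loss being ``paid for'' by the perturbation magnitude. I expect the right device is to run a max-flow/min-cut accounting on $\tilde{\mc N}$: the value of the maximum origin-to-destination flow in $\tilde{\mc N}$ is $C(\tilde{\mc N})\ge C(\mc N)-\delta$ (since reducing each $\flowmax_e$ by at most $\delta_e$, totalling $\delta$, drops every cut by at most $\delta$), and the limit flow $\tilde f^*$ must transfer at least $\min\{\lambda_0,C(\tilde{\mc N})\}$ to the destination — the latter because the dynamics cannot do worse than the static bottleneck when the routing keeps every link strictly active. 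Establishing that last inequality between the \emph{dynamical} limit flow and the \emph{static} max-flow of the perturbed network is the crux; I would prove it by contradiction, assuming $\sum_{e\in\mc E^-_n}\tilde f^*_e<\min\{\lambda_0,C(\tilde{\mc N})\}$, extracting the saturated cut $\mc E^+_{\mc U^*\cup\{0\}}$, and showing its perturbed capacity would then be below $C(\tilde{\mc N})$, contradicting the definition of the min-cut.
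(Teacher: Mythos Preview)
Your reduction to the globally attractive limit flow $\tilde f^*$ is sound, and so is the upper bound via Proposition~\ref{propUB}. The gap is in the lower bound: the inequality you are ``really after,'' namely
\[
\sum_{e\in\mc E^-_n}\tilde f_e^*\ \ge\ \min\{\lambda_0, C(\tilde{\mc N})\}\,,
\]
is \emph{false} for locally responsive policies. Take $\mc V=\{0,1,2\}$ with links $(0,1),(0,2),(1,2)$, each of capacity $1$, inflow $\lambda_0=3/2$, and the symmetric logit policy of Example~\ref{example:routing} at node $0$. Perturb only $(1,2)$, setting $\tilde f^{\max}_{(1,2)}=\varepsilon$; then $\delta=1-\varepsilon<2=C(\mc N)$ and $C(\tilde{\mc N})=1+\varepsilon$. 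Since node $0$ sees only the (unperturbed) densities on $(0,1)$ and $(0,2)$, by symmetry the limit flow out of the origin is $\tilde f^*_{(0,1)}=\tilde f^*_{(0,2)}=3/4$; node $1$ then saturates, so $\tilde f^*_{(1,2)}=\varepsilon$, and the outflow at the destination is $3/4+\varepsilon<1+\varepsilon=\min\{\lambda_0,C(\tilde{\mc N})\}$. The intuition ``the dynamics cannot do worse than the static bottleneck when the routing keeps every link strictly active'' fails precisely because the routing is \emph{local}: node $0$ cannot see the downstream bottleneck and keeps sending half of $\lambda_0$ toward it. Your contradiction argument breaks in exactly the same place: here $\mc U^*=\{1\}$, the cut $\mc E^+_{\{0,1\}}$ has perturbed capacity $1+\varepsilon=C(\tilde{\mc N})$, and no contradiction ensues.

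There is a second, structural issue: your argument makes no use of the stretching coefficient $\theta$, yet the definition of weak resilience is $\gamma_0=\lim_{\theta\to\infty}\lim_{\alpha\downarrow0}\gamma_{\alpha,\theta}$, and the inner limit requires a \emph{uniform} lower bound on the outflow over all perturbations with $\delta\le C(\mc N)-\varepsilon$ and stretching at most $\theta$. Even if the pointwise positivity you want were true, uniformity in $\delta$ is not automatic, and the bound on $\theta$ is what prevents a perturbation from stretching a flow function so that its equilibrium density is enormous while its capacity is barely touched --- which would drive the routing fraction on that link to zero despite small $\delta_e$. The paper handles this with a quantitative lemma: for each $\theta\ge1$ there is $\beta_\theta\in(0,1)$ such that whenever $\tilde f_e^*\le\tilde f_e^{\max}/2$ one has $\tilde f_e^*\ge\beta_\theta\tilde\lambda_v^*$ (this uses $\tilde\rho_e^*\le\tilde\rho_e^\mu\le\theta\rho_e^\mu$ together with strict positivity and Property~(a) of $G^v$). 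With this in hand, one does not try to compare $\tilde f^*$ to $C(\tilde{\mc N})$ at all; instead one iteratively constructs a path $0=v_0,v_1,\ldots,v_k=n$ such that at each step some outgoing link of the current frontier carries at least $\lambda_0\alpha\beta_\theta^{j-n}$, arguing by contradiction that otherwise the cut at the frontier would witness $\delta>C(\mc N)-2|\mc E|\lambda_0\beta_\theta^{1-n}\alpha$. This yields $\gamma_{\alpha,\theta}\ge C(\mc N)-c(\theta)\alpha$, and letting $\alpha\downarrow0$ finishes. The saturated-node set $\mc U^*$ and the static max-flow of $\tilde{\mc N}$ play no role.
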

\proof See Section \ref{sec:proof2}.\qed
\medskip

Theorem \ref{maintheo-weakstability}, combined with Proposition \ref{propUB}, shows that locally responsive distributed routing policies achieve the maximal weak resilience possible on a given flow network $\mc N$. A consequence of this result is that locality constraints on the feedback information available to routing policies do not reduce the achievable weak resilience. It is also worth observing that such maximal weak resilience coincides with min-cut capacity of the network, and is therefore independent of the initial flow $f^{\circ}$. This is in sharp contrast with the results on the strong resilience of dynamical flow networks presented in the companion paper \cite{PartII}. There, it is shown that the strong resilience depends on the initial flow, and local information constraints reduce the maximal strong resilience achievable on a given flow network.

\section{Proof of Theorem \ref{thm:uniquelimitflow}}\label{sec:proofthmuniquelimit}
Let $\mc N$ be a flow network satisfying Assumptions \ref{ass:acyclicity} and \ref{ass:flowfunction}, $\mc G$ a locally responsive distributed routing policy, and $\lambda_0\ge0$ a constant inflow. We shall prove that there exists a unique $f^*\in\cl(\mc F)$ such that the flow $f(t)$ associated to the solution of the dynamical flow network (\ref{dynsyst}) converges to $f^*$ as $t$ grows large, for every initial condition $\rho(0)\in\mc R$. Before proceeding, it is worth observing that, thanks to Property (a) of Definition \ref{def:myopicpolicy} of locally responsive distributed routing policies, Assumption \ref{ass:flowfunction} on the monotonicity of the flow functions, and the structure of the dynamical flow network (\ref{dynsyst}), one may rewrite (\ref{dynsyst}) as 
$$\frac{\de}{\de t}\rho_e=F_e(\rho)\,,\qquad \forall e\in\mc E\,,$$ where $F:\mc R\to\R^{\mc E}$ is differentiable and such that
$$\frac{\partial}{\partial\rho_e}F_e(\rho)\le0\,,\qquad \frac{\partial}{\partial\rho_j}F_e(\rho)\ge0\,,\qquad \forall e\ne j\in\mc E\,.$$
The above shows that, the dynamical flow network (\ref{dynsyst}) driven by a locally responsive distributed routing policy $\mc G$ is cooperative in the sense of Hirsch \cite{Hirsch:82,Hirsch:85}. Indeed, one may apply the standard theory of cooperative dynamical systems and monotone flows \cite{Hirsch:82,Hirsch:85,Smith:95} in order to prove some properties of the solution of (\ref{dynsyst}), e.g., convergence from almost every initial condition. However, we shall not rely on this general theory and rather use a direct approach based on a Lyapunov argument exploiting the particular structure of the dynamical system (\ref{dynsyst}), and leading 
us to stronger results, i.e., \emph{global} convergence to a \emph{unique} limit flow.
 
We shall proceed by proving a series of intermediate results some of which will prove useful also in the companion paper \cite{PartII}. First, given an arbitrary non-destination node $0\le v<n$, we shall focus on the input-output properties of the \emph{local system}
\be\label{localsys}\frac{\de}{\de t}\rho_e(t)=\lambda(t)G^v_e(\rho^v(t))-f_e(t)\,,\qquad f_e(t)=\mu_e(\rho_e(t))\,,\qquad\forall e\in\mc E^+_v\,,\ee
where $\lambda(t)$ is a nonnegative-real-valued, Lipschitz continuous input, and $f^v(t):=\{f_e(t):\,e\in\mc E^+_v\}$ is interpreted as the output. We shall first prove existence (and uniqueness) of a globally attractive limit flow for the system (\ref{localsys}) under constant input. We shall then extend this result to show the existence and attractivity of a local equilibrium point under time-varying, convergent local input. 
Finally, we shall exploit this local input-output property, and the assumption of acyclicity of the network topology in order to establish the main result.

The following is a simple technical result, which will prove useful in order to apply Property (a) of Definition \ref{def:myopicpolicy}.
\begin{lemma}
\label{lem:coop-ext}
Let $0\le v<n$ be a nondestination node, and $G^v:\mc R_v\to\mc S_v$ a continuously differentiable function satisfying Property (a) of Definition \ref{def:myopicpolicy}. Then, for any $\sigma,\varsigma\in\mc R_v$, 
\begin{equation}
\label{eq:coop-extended}
\sum\nolimits_{e \in \mc E_v^+} \sgn(\sigma_e- \varsigma_e) \left(G_e^v(\sigma)-G_e^v(\varsigma) \right) \, \le \, 0.
\end{equation}
\end{lemma}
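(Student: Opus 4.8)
The plan is to reduce the sum over $\mc E_v^+$ to a one-parameter comparison by interpolating between $\varsigma$ and $\sigma$ along a path that changes one coordinate at a time, and then to invoke Property (a) on each such elementary step. Concretely, I would enumerate $\mc E_v^+ = \{e_1,\dots,e_m\}$ and define intermediate points $\xi^{(0)}=\varsigma$, $\xi^{(k)} = \xi^{(k-1)}$ with its $e_k$-coordinate replaced by $\sigma_{e_k}$, so that $\xi^{(m)}=\sigma$. The left-hand side of \eqref{eq:coop-extended} telescopes as $\sum_{k=1}^m \sum_{e\in\mc E_v^+} \sgn(\sigma_e-\varsigma_e)\bigl(G^v_e(\xi^{(k)})-G^v_e(\xi^{(k-1)})\bigr)$, so it suffices to show each inner sum (a single coordinate move, in direction $e_k$) is $\le 0$.

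For the single-coordinate step from $\xi^{(k-1)}$ to $\xi^{(k)}$, write $\Delta = \sigma_{e_k}-\varsigma_{e_k}$; WLOG $\Delta \neq 0$. By the fundamental theorem of calculus, $G^v_e(\xi^{(k)})-G^v_e(\xi^{(k-1)}) = \Delta \int_0^1 \frac{\partial}{\partial \rho_{e_k}} G^v_e\bigl(\xi^{(k-1)} + s\Delta\, \mathbbm 1_{e_k}\bigr)\, ds$. For $e \neq e_k$, Property (a) gives $\frac{\partial}{\partial \rho_{e_k}} G^v_e \ge 0$, so the corresponding term contributes $\sgn(\sigma_{e_k}-\varsigma_{e_k})\cdot\sgn(\Delta)$ times a nonnegative quantity — but $\sgn(\sigma_{e_k}-\varsigma_{e_k})$ need not match $\sgn(\Delta)=\sgn(\sigma_{e_k}-\varsigma_{e_k})$... in fact it does, since $\Delta = \sigma_{e_k}-\varsigma_{e_k}$. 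So I must be careful: the factor in the inner sum for index $e$ is $\sgn(\sigma_e-\varsigma_e)$, not $\sgn(\xi^{(k)}_e - \xi^{(k-1)}_e)$. The clean way around this: since $\sum_{e\in\mc E_v^+} G^v_e \equiv 1$ on $\mc S_v$, differentiating gives $\sum_{e} \frac{\partial}{\partial\rho_{e_k}} G^v_e \equiv 0$, hence $\frac{\partial}{\partial\rho_{e_k}} G^v_{e_k} = -\sum_{e\neq e_k}\frac{\partial}{\partial\rho_{e_k}} G^v_e \le 0$. Thus for the coordinate $e_k$ itself the derivative is nonpositive, and for all $e\neq e_k$ it is nonnegative. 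Now, bounding crudely, $\sum_e \sgn(\sigma_e-\varsigma_e)\bigl(G^v_e(\xi^{(k)})-G^v_e(\xi^{(k-1)})\bigr) \le \sgn(\sigma_{e_k}-\varsigma_{e_k})\bigl(G^v_{e_k}(\xi^{(k)})-G^v_{e_k}(\xi^{(k-1)})\bigr) + \sum_{e\neq e_k}\bigl|G^v_e(\xi^{(k)})-G^v_e(\xi^{(k-1)})\bigr|$, and the first term equals $\sgn(\Delta)\cdot\Delta\int_0^1 \frac{\partial}{\partial\rho_{e_k}}G^v_{e_k} = |\Delta|\int_0^1 \frac{\partial}{\partial\rho_{e_k}}G^v_{e_k}\,ds$, which is $\le 0$ and in absolute value equals $\sum_{e\neq e_k} |\Delta|\int_0^1 \frac{\partial}{\partial\rho_{e_k}} G^v_e\,ds = \sum_{e\neq e_k}\bigl|G^v_e(\xi^{(k)})-G^v_e(\xi^{(k-1)})\bigr|$ (all these derivatives being nonnegative). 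Hence the whole inner sum is $\le 0$, completing the telescoped bound.

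The main obstacle is precisely the mismatch just flagged: the sign weights $\sgn(\sigma_e-\varsigma_e)$ are fixed by the global endpoints while the telescoping increments only move one coordinate, so the naive "apply Property (a) termwise" does not immediately give a sign. The resolution exploits the simplex constraint $\sum_e G^v_e\equiv 1$, which forces the diagonal partial derivative to absorb exactly the total mass of the off-diagonal ones, making the worst-case term $e_k$ cancel against the sum of absolute values of the others. I would present the argument in the order: (1) reduce to single-coordinate moves by telescoping; (2) record the consequence $\frac{\partial}{\partial\rho_{e_k}}G^v_{e_k} = -\sum_{e\neq e_k}\frac{\partial}{\partial\rho_{e_k}}G^v_e$ of the simplex constraint; (3) bound the single-step sum using this identity and Property (a); (4) sum the telescoped inequalities.
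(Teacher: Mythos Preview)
Your argument is correct. The crude bound $\sgn(\sigma_e-\varsigma_e)\,x_e \le |x_e|$ for $e\neq e_k$, combined with the exact identity $\partial G^v_{e_k}/\partial\rho_{e_k} = -\sum_{e\neq e_k}\partial G^v_e/\partial\rho_{e_k}$ coming from the simplex constraint, makes each single-coordinate step contribute at most~$0$, and telescoping finishes it. One cosmetic point: the exposition would read more smoothly if you dropped the false-start paragraph about the sign mismatch and went straight to the simplex identity, which is the real engine.

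The paper's proof uses the same two ingredients (Property~(a) and the constraint $\sum_e G^v_e\equiv 1$) but organizes the path differently. Instead of moving one coordinate at a time, it partitions $\mc E^+_v$ into $\mc K=\{e:\sigma_e>\varsigma_e\}$ and $\mc L=\{e:\sigma_e<\varsigma_e\}$ and proves directly that the aggregate $G_{\mc K}(\zeta):=\sum_{k\in\mc K}G^v_k(\zeta)$ satisfies $G_{\mc K}(\sigma)\le G_{\mc K}(\varsigma)$ (and symmetrically $G_{\mc L}(\sigma)\ge G_{\mc L}(\varsigma)$), via a two-segment path: first raise all $\mc K$-coordinates from $\varsigma$ to $\sigma$, then lower all $\mc L$-coordinates. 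On the first segment it rewrites $G_{\mc K}=1-G_{\mc J}$ so that only off-diagonal partials appear, each with the right sign. This yields the slightly stronger intermediate statement that the total routing fraction into the ``increased'' set cannot go up, which is useful on its own (and is invoked later, e.g.\ in the proof of Lemma~\ref{lemma:monotone}). Your coordinate-by-coordinate version is more mechanical but equally valid, and arguably easier to verify line by line.
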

\begin{proof}
Consider the sets $\mc K:=\{e\in\mc E_v^+:\,\sigma_e>\varsigma_e\}$, $\mc J:=\{e\in\mc E^+_v:\,\sigma_e\le\varsigma_e\}$, and $\mc L:=\{e\in\mc E_v^+:\,\sigma_e<\varsigma_e\}$. Define $G_{\mc K}(\zeta):=\sum_{k\in\mc K}G^v_k(\zeta)$, $G_{\mc L}(\zeta):=\sum_{l\in\mc L}G^v_l(\zeta)$, and $G_{\mc J}(\zeta):=\sum_{j\in\mc J}G_j^v(\zeta)$. We shall show that, for any $\sigma,\varsigma\in\mc R_v$,  
\begin{equation}
\label{eq:JK-ineq}
G_{\mc K}(\sigma) \le G_{\mc K}(\varsigma), \qquad G_{\mc L}(\sigma) \ge G_{\mc L}(\varsigma)\,.\end{equation}
Let $\xi\in\mc R_v$ be defined by $\xi_k=\sigma_k$ for all $k\in\mc K$, and $\xi_e=\varsigma_e$ for all $e\in\mc E^+_v\setminus\mc K$. We shall prove that $G_{\mc K}(\sigma)-G_{\mc K}(\varsigma)\le0$ by writing it as a path integral of $\nabla G_{\mc K}(\zeta)$ first along the segment $S_{\mc K}$ from $\varsigma$ to $\xi$, and then along the segment $S_{\mc L}$ from $\xi$ to $\sigma$. Proceeding in this way, one gets 
\be\label{GK-GK} G_{\mc K}(\sigma)-G_{\mc K}(\varsigma)=
\int_{S_{\mc K}}\nabla G_{\mc K}(\zeta)\cdot\de\zeta+\int_{S_{\mc L}}\nabla G_{\mc K}(\zeta)\cdot\de\zeta=
-\int_{S_{\mc K}}\nabla G_{\mc J}(\zeta)\cdot\de\zeta+\int_{S_{\mc L}}\nabla G_{\mc K}(\zeta)\cdot\de\zeta\,,\ee
where the second equality follows from the fact that $G_{\mc K}(\zeta)=1-G_{\mc J}(\zeta)$ since $G^v(\zeta)\in\mc S_v$. Now, Property (a) of Definition \ref{def:myopicpolicy} implies that $\partial G_{\mc K}(\zeta)/\partial\zeta_l\ge0$ for all $l\in\mc L$, and $\partial G_{\mc J}(\zeta)/\partial\zeta_k\ge0$ for all $k\in\mc K$. It follows that $\nabla G_{\mc J}(\zeta)\cdot\de\zeta\ge0$ along $S_{\mc K}$, and $\nabla G_{\mc K}(\zeta)\cdot\de\zeta\le0$ along $S_{\mc L}$. Substituting in (\ref{GK-GK}), one gets the first inequality in (\ref{eq:JK-ineq}). The second inequality in (\ref{eq:JK-ineq}) follows by similar arguments. Then, one has 
$$\sum\nolimits_{e \in \mc E_v^+} \sgn(\sigma_e- \varsigma_e) \left(G_e^v(\sigma)-G_e^v(\varsigma) \right)=G_{\mc K}(\sigma)-G_{\mc K}(\varsigma)+G_{\mc L}(\varsigma)-G_{\mc L}(\sigma)\le 0\,,$$
which proves the claim.\end{proof}
\medskip

We can now exploit Lemma~\ref{lem:coop-ext} in order to prove the following key result guaranteeing that the solution of the local dynamical system (\ref{localsys}) with constant input $\lambda(t)\equiv\lambda$ converges to a limit point which depends on the value of $\lambda$ but not on the initial condition. (Cf.~Example \ref{example:limitflow} and Figure \ref{fig:vectfield}.)

\begin{lemma}\textit{(Existence of a globally attractive limit flow for the local dynamical system under constant input)}
\label{lemmalocalexistence} 
Let $0\le v<n$ be a non-destination node, and $\lambda$ a nonnegative-real constant. Assume that $G^v:\mc R_v\to\mc S_v$ is continuously differentiable and satisfies Property (a) of Definition~\ref{def:myopicpolicy}. Then, there exists a unique $f^*(\lambda)\in\cl(\mc F_v)$ such that the solution of the dynamical system~\eqref{localsys} with constant input $\lambda(t)\equiv\lambda$ satisfies 
$$\lim_{t\to+\infty}f_e(t)=f_e^*(\lambda)\,,\qquad\forall e\in\mc E^+_v\,,$$
for every initial condition $\rho^v(0)\in\mc R_v$.
\end{lemma}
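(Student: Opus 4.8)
The plan is to exploit the cooperative structure established via Lemma~\ref{lem:coop-ext} together with the monotonicity of the flow functions, and to prove convergence by a Lyapunov argument based on the $L^1$-type functional measuring the total variation of densities along the flow. First I would fix the constant input $\lambda(t)\equiv\lambda$ and observe that the local system \eqref{localsys} is well-posed: the right-hand side is Lipschitz on $\mc R_v$ (since $G^v$ is continuously differentiable and the $\mu_e$ have bounded derivative), solutions stay in $\mc R_v$ because $\rho_e=0$ forces $\frac{\de}{\de t}\rho_e=\lambda G^v_e\ge0$, and solutions are bounded, since whenever $\rho_e$ is large, $f_e=\mu_e(\rho_e)$ is close to $f^{\max}_e$, while $\lambda G^v_e\le\lambda$, so $\rho_e$ cannot grow without bound (more carefully, $\sum_e\rho_e$ has negative drift once all the $f_e$ are near capacity and $\lambda<\sum_e f^{\max}_e$; in the borderline case one argues directly). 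Hence every trajectory has a nonempty compact $\omega$-limit set.

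**Key steps.** The heart of the proof is the following contraction-type estimate. Given two solutions $\rho^v(t)$ and $\varsigma^v(t)$ of \eqref{localsys} with the \emph{same} constant input $\lambda$, consider $V(t):=\sum_{e\in\mc E^+_v}|\mu_e(\rho_e(t))-\mu_e(\varsigma_e(t))|=\sum_e|f_e(t)-g_e(t)|$, where $g_e:=\mu_e(\varsigma_e)$. Using $\sgn(f_e-g_e)=\sgn(\rho_e-\varsigma_e)$ (valid since $\mu_e$ is strictly increasing), differentiating along the flow gives, at points of differentiability,
\[
\dot V=\sum_{e\in\mc E^+_v}\sgn(\rho_e-\varsigma_e)\,\mu_e'(\rho_e)\bigl(\lambda G^v_e(\rho^v)-f_e\bigr)-\sum_{e\in\mc E^+_v}\sgn(\rho_e-\varsigma_e)\,\mu_e'(\varsigma_e)\bigl(\lambda G^v_e(\varsigma^v)-g_e\bigr);
\]
the cleaner route is to work directly with $W(t):=\sum_e|\rho_e(t)-\varsigma_e(t)|$, for which $\dot W=\lambda\sum_e\sgn(\rho_e-\varsigma_e)\bigl(G^v_e(\rho^v)-G^v_e(\varsigma^v)\bigr)-\sum_e\sgn(\rho_e-\varsigma_e)\bigl(\mu_e(\rho_e)-\mu_e(\varsigma_e)\bigr)$. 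By Lemma~\ref{lem:coop-ext} the first sum is $\le0$, and by strict monotonicity of each $\mu_e$ the second sum is $\ge0$ with each summand zero only where $\rho_e=\varsigma_e$; hence $\dot W\le0$, and $\dot W<0$ unless $\rho^v(t)=\varsigma^v(t)$. So $W$ is nonincreasing along pairs of trajectories, which gives both uniqueness of any equilibrium and, via a LaSalle-type argument, that the $\omega$-limit set of any single trajectory is a singleton equilibrium: take $\varsigma^v$ to be a point in the $\omega$-limit set run as a trajectory, note $W$ is constant along the limiting trajectory, force $\rho^v=\varsigma^v$ on the limit set, and conclude the limit set consists of equilibria; combined with $W$-monotonicity between any two equilibria being strict, there is exactly one. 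Existence of an equilibrium follows by a fixed-point/Brouwer argument on the compact convex set $\cl(\mc F_v)$ — the map sending a flow vector $f^v$ to $\lambda G^v(\mu^{-1}(f^v))$ composed appropriately has a fixed point — or, more economically, directly from compactness of the $\omega$-limit set plus the LaSalle argument above, which already produces an equilibrium in $\cl(\mc R_v)$; one then checks the corresponding flow lies in $\cl(\mc F_v)$. Passing from $\rho^v(t)\to\rho^{v,*}$ to $f_e(t)\to f^*_e(\lambda):=\mu_e(\rho^{v,*}_e)$ is immediate by continuity of $\mu_e$ (with the convention that $\rho^{v,*}_e=+\infty$ corresponds to $f^*_e=f^{\max}_e$, which is why $f^*(\lambda)$ lives in $\cl(\mc F_v)$ and not necessarily $\mc F_v$).

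**Main obstacle.** The delicate point is handling trajectories that escape to the boundary at infinity, i.e.\ where some $\rho_e(t)\to+\infty$. In that regime the natural state variable is $f_e$ rather than $\rho_e$, and one must argue that $f_e(t)$ still converges; the functional $W$ in densities can become infinite, so the argument should really be run with $V(t)=\sum_e|f_e(t)-g_e(t)|$, using that $\sgn(f_e-g_e)=\sgn(\rho_e-\varsigma_e)$ and that $\frac{\de}{\de t}f_e=\mu_e'(\rho_e)(\lambda G^v_e-f_e)$ with $\mu_e'(\rho_e)>0$; monotonicity of $\mu_e$ keeps the sign structure intact and Lemma~\ref{lem:coop-ext} still applies to the routing term, but one must be careful that $\mu_e'(\rho_e)$ may tend to $0$, so $V$ need not be strictly decreasing — one recovers strict decrease only on the relevant sublevel sets or invokes that $f_e$ is bounded and the "stuck at capacity" set is itself an invariant equilibrium face. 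I would treat the compact case (all $\rho_e$ bounded) first by the clean $W$-argument, then reduce the general case to it by noting that the set of links whose density diverges is determined in the limit and, by Property~(a), the routing fractions to all other links converge, effectively giving a lower-dimensional local system of the same type to which the compact argument applies; this is where the proof requires the most care.
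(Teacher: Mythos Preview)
Your core Lyapunov estimate is exactly the one the paper uses: with $W(t)=\sum_{e\in\mc E^+_v}|\rho_e(t)-\varsigma_e(t)|$ and $\xi(t)=\sum_{e\in\mc E^+_v}|\mu_e(\rho_e(t))-\mu_e(\varsigma_e(t))|$, Lemma~\ref{lem:coop-ext} plus monotonicity of the $\mu_e$ give $\dot W\le-\xi$. Where you diverge from the paper is in how you extract convergence from this, and that is where a genuine gap appears.

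Your argument assumes density trajectories are bounded so that one has a compact $\omega$-limit set and can run LaSalle. But for $\lambda\ge\lambda_v^{\max}=\sum_{e\in\mc E^+_v}f_e^{\max}$ this is simply false: $\sum_e\rho_e(t)$ has nonnegative drift and diverges, there is no equilibrium in $\mc R_v$, and the $\omega$-limit set in densities is empty. Your proposed fix---identify the set of links whose density diverges and reduce to a lower-dimensional local system on the remaining links---relies on knowing that the routing fractions to the diverging links vanish and that the restricted routing converges to a well-defined map on the surviving links. That is precisely Property~(b) of Definition~\ref{def:myopicpolicy}, which is \emph{not} assumed in this lemma (only Property~(a) is). So the reduction is not available here, and the unbounded case is left unproved.

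The paper sidesteps the whole issue. From $\dot W\le-\xi$ and $W\ge0$ one gets $\int_0^{\infty}\xi(u)\,\de u<+\infty$ regardless of whether densities stay bounded. Since $\xi(t)=\|\mu^v(\Phi^t(\sigma))-\mu^v(\Phi^t(\varsigma))\|_1$ is uniformly continuous in $t$ (the $\mu_e$ have bounded derivative, and $\Phi^t$ is Lipschitz in $t$), Barbalat's lemma yields $\xi(t)\to0$. Then comes the key trick you are missing: take $\varsigma=\Phi^{\tau}(\sigma)$ for arbitrary $\tau\ge0$, so that $\xi(t)=\|\Psi^t(\sigma)-\Psi^{t+\tau}(\sigma)\|_1\to0$. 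This says the \emph{flow} trajectory $t\mapsto\Psi^t(\sigma)$ is Cauchy in the compact set $\cl(\mc F_v)$, hence convergent to some $f^*(\lambda,\sigma)$; applying $\xi(t)\to0$ once more with two different initial conditions shows $f^*(\lambda,\sigma)$ is independent of $\sigma$. No equilibrium needs to be produced in advance, no Brouwer, no case split on boundedness, and no appeal to Property~(b). Your $V$-functional idea for the unbounded case was heading in the right direction, but the time-shift-plus-Barbalat argument is what makes it work cleanly.
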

\begin{proof} 
Let us fix some $\lambda\in\R_+$. For every initial condition $\sigma\in\mc R_v$, and time $t\ge0$, let $\Phi^t(\sigma):=\rho^v(t)$ be the value of the solution of (\ref{localsys}) with constant input $\lambda(t)\equiv\lambda$ and initial condition $\rho(0)=\sigma$, at time $t \geq 0$. Also, let $\Psi^t(\sigma)\in\mc R_v$ be defined by $\Psi^t_e(\sigma)=\mu_e(\Phi^t_e(\sigma))$, for every $e\in\mc E^+_v$. Now, fix two initial conditions $\sigma,\varsigma\in\mc R_v$, and define 
$$\chi(t):=||\Phi^t(\sigma)-\Phi^t(\varsigma)||_1\,,\qquad\xi(t):=||\Psi^t(\sigma)-\Psi^t(\varsigma)||_1\,.$$
Since $\mu_e(\rho_e)$ is increasing by Assumption \ref{ass:flowfunction}, one has that 
\begin{equation}
\label{eq:same-signs}
\sgn(\Phi^t_e(\sigma)-\Phi^t_e(\varsigma))=\sgn(\Psi^t_e(\sigma)-\Psi^t_e(\varsigma))\,.
\end{equation}
On the other hand, using Lemma~\ref{lem:coop-ext}, one gets 
\begin{equation}
\label{eq:G-ineq}
\sum\nolimits_{e\in\mc E^+_v}\sgn(\Phi^t_e(\sigma)-\Phi^t_e(\varsigma))\l(G^v_e(\Phi^t(\sigma))-G_e^v(\Phi^t(\varsigma))\r) \, \le \, 0\,,\qquad \forall \, t\ge0\,.
\end{equation}
From \eqref{eq:same-signs} and \eqref{eq:G-ineq}, it follows that, for all $0 \le s\le t $, 
\be \label{eq:chidot}
\ba{rcl}
\chi(t)&=&||\Phi^t(\varsigma)-\Phi^t(\sigma)||_1 \\[5pt]
&=&\chi(s)+\ds\int_s^t\sum\nolimits_{e\in\mc E^+_v}\sgn(\Phi^u_e(\sigma)-\Phi^u_e(\varsigma))
\big(G^v_e(\Phi^u(\sigma))-G_e^v(\Phi^s(\varsigma))-\Psi_e^u(\sigma)+\Psi_e^u(\varsigma)\big)\de u\\[10pt]
&\le&
\chi(s)-\ds\int_s^t||\Psi^u(\sigma)-\Psi^u(\varsigma)||_1\de u
\\[10pt]
&=&\chi(s)-\ds\int_s^t\xi(u)\de u\,.\ea
\ee
Since $\chi(t)\ge0$, (\ref{eq:chidot}) implies that $\int_0^t\xi(u)\de u\le\chi(0)$ for all $t\ge0$. Since $\xi(u)\ge0$, it follows that the limit 
$$\int_0^{+\infty}\xi(u)\de u=\lim_{t\to+\infty}\int_0^{t}\xi(u)\de u\le\chi(0)$$
exists and is finite. Now, observe that $\xi(t)=||\mu^v(\Phi^t(\sigma))-\mu^v(\Phi^t(\varsigma))||_1$ is a uniformly continuous function of $t$ on $[0,+\infty)$, as it is the composition of the uniformly continuous functions $f^v\mapsto||f^v||_1$, $\rho^v\mapsto\mu^v(\rho^v)$ (whose uniform continuity follows from Assumption \ref{ass:flowfunction}), and $t\mapsto\Phi^t(\sigma)$ and $t\mapsto\Phi^t(\varsigma)$ (whose uniform continuity follows from being the solutions of the local dynamical system (\ref{localsys})). Hence, an application of Barbalat's lemma \cite[Lemma 4.2]{Khalil:96} implies that $\xi(t)$ converges to $0$, as $t$ grows large. That is,
\be\label{Phit->0}\lim_{t\to+\infty}||\Psi^t(\sigma)-\Psi^t(\varsigma)||_1=0\,,\qquad \forall\sigma,\varsigma\in\mc R_v\,.\ee
Now, for any $\sigma\in\mc R_v$, one can apply (\ref{Phit->0}) with $\varsigma:=\Phi^{\tau}(\sigma)$, and get that 
$$\lim_{t\to+\infty}||\Psi^t(\sigma)-\Psi^{t+\tau}(\sigma)||_1=
\lim_{t\to+\infty}||\Psi^t(\sigma)-\Psi^t(\Phi^{\tau}(\sigma))||_1=0\,,\qquad\forall\tau\ge0\,.$$
The above implies that, for any initial condition $\rho^v(0)=\sigma\in\mc R_v$, the flow $\Psi^t(\sigma)$ is Cauchy, and hence convergent to some $f^*(\lambda,\sigma)\in\cl({\mc F}_v)$. It follows from (\ref{Phit->0}) again, that 
$$||f^*(\lambda,\sigma)-f^*(\lambda,\varsigma)||_1=\lim_{t\to+\infty}||\Psi^t(\sigma)-\Psi^t(\varsigma)||_1=0\,,\qquad\forall\sigma,\varsigma\in\mc R_v\,,$$
which shows that the limit flow does not depend on the initial condition.
\end{proof}

Now, let us define $$\lambda_v^{\max}:=\sum\nolimits_{e\in\mc E^+_v}f_e^{\max}\,.$$ 
The following result characterizes the way the local limit flow $f^*(\lambda)$ depends on the local input $\lambda$. (Cf.~Example \ref{example:limitflow} and Figure \ref{fig:fstar}.)
\begin{lemma}[Dependence of the local limit flow on the input]\label{lemma:f*(lambda)}
Let $0\le v<n$ be a non-destination node, and $\lambda$ a nonnegative-real constant. Assume that $G^v:\mc R_v\to\mc S_v$ is continuously differentiable and satisfies Properties (a) and (b) of Definition~\ref{def:myopicpolicy}. Let $f^*(\lambda)\in\cl(\mc F_v)$ be the limit flow of the local system (\ref{localsys}) with constant input $\lambda(t) \equiv \lambda$, existence and uniqueness of which follow from Lemma \ref{lemmalocalexistence}. Then, 
\begin{description}
\item[(i)]  if $\lambda<\lambda_v^{\max}$, then 
$$f_e^*(\lambda)<f_e^{\max}\,,\qquad\lambda G^v_e({\mu}^{-1}(f^*(\lambda)))= f_e^*\,,\qquad\forall e\in\mc E^+_v\,;$$
\item[(ii)] if $\lambda\ge\lambda_v^{\max}$, then $f^*_e(\lambda)=f_e^{\max}$ for every $e\in\mc E^+_v$.
\end{description}
Moreover, $f^*(\lambda)$ is continuous as a map from $\R_+$ to $\cl(\mc F_v)$.\end{lemma}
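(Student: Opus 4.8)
The plan is to first determine, for a fixed constant input $\lambda$, which outgoing links of $v$ saturate in the limit flow $f^*(\lambda)$ — whose existence and uniqueness are already given by Lemma~\ref{lemmalocalexistence} — and then read off (i) and (ii) from that dichotomy; continuity will be handled last by a compactness-plus-uniqueness argument. The basic bookkeeping tool is the conservation identity obtained by summing (\ref{localsys}) over $e\in\mc E^+_v$ and using $G^v(\rho^v)\in\mc S_v$:
\[
\frac{\de}{\de t}\sum\nolimits_{e\in\mc E^+_v}\rho_e(t)=\lambda-\sum\nolimits_{e\in\mc E^+_v}f_e(t)\,.
\]
Fix $\lambda\ge0$, write $f^*=f^*(\lambda)$, and split $\mc E^+_v$ into the saturated set $\mc K:=\{e:\,f^*_e=f_e^{\max}\}$ and its complement $\mc J:=\mc E^+_v\setminus\mc K$. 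Since each $\mu_e$ is a strictly increasing bijection of $\R_+$ onto $[0,f_e^{\max})$ by Assumption~\ref{ass:flowfunction}, the convergence $f_e(t)\to f^*_e$ forces $\rho_e(t)\to+\infty$ for $e\in\mc K$ and $\rho_j(t)\to\rho^*_j:=\mu_j^{-1}(f^*_j)<+\infty$ for $j\in\mc J$.

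The crux is excluding the ``mixed'' configuration $\emptyset\ne\mc K\subsetneq\mc E^+_v$. If it occurred, Property~(b) of Definition~\ref{def:myopicpolicy} — applied with this $\mc J$, whose hypotheses $\rho_e(t)\to+\infty$ for $e\notin\mc J$ and $\rho_j(t)\to\rho^*_j$ for $j\in\mc J$ are exactly what we just established — would give $G^v_e(\rho^v(t))\to0$ for every $e\in\mc K$, so that $\frac{\de}{\de t}\rho_e(t)=\lambda G^v_e(\rho^v(t))-\mu_e(\rho_e(t))\to-f_e^{\max}<0$, contradicting $\rho_e(t)\to+\infty$. Hence $\mc K=\emptyset$ or $\mc K=\mc E^+_v$. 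If $\mc K=\emptyset$, then $\rho^v(t)\to\rho^*:=\mu^{-1}(f^*)$ (a finite vector), so $\frac{\de}{\de t}\rho_e(t)\to\lambda G^v_e(\rho^*)-f^*_e$; since $\rho_e(t)$ converges, this limit must vanish (a convergent $C^1$ trajectory whose derivative has a limit has that limit equal to $0$), giving $\lambda G^v_e(\mu^{-1}(f^*))=f^*_e$ for all $e$, and summing yields $\lambda=\sum_e f^*_e<\sum_e f_e^{\max}=\lambda_v^{\max}$. If instead $\mc K=\mc E^+_v$, then $\sum_e f_e(t)\to\lambda_v^{\max}$, so the conservation identity forces $\frac{\de}{\de t}\sum_e\rho_e(t)\to\lambda-\lambda_v^{\max}$, which would drive $\sum_e\rho_e(t)$ to $-\infty$ unless $\lambda\ge\lambda_v^{\max}$. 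Thus $\mc K=\emptyset\iff\lambda<\lambda_v^{\max}$ and $\mc K=\mc E^+_v\iff\lambda\ge\lambda_v^{\max}$ (the two configurations being mutually exclusive and exhaustive), which is precisely (i) and (ii).

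For continuity, note that on $[\lambda_v^{\max},+\infty)$ the map is the constant $(f_e^{\max})_{e\in\mc E^+_v}$, so it remains to treat $[0,\lambda_v^{\max}]$. I will first record a uniqueness fact: for any $\lambda$, the system $\lambda G^v_e(\mu^{-1}(h))=h_e$, $e\in\mc E^+_v$, has at most one solution $h\in\prod_{e}[0,f_e^{\max})$ — such an $h$ makes $\mu^{-1}(h)$ an equilibrium of (\ref{localsys}) with input $\lambda$, whose constant trajectory converges to $f^*(\lambda)$ by Lemma~\ref{lemmalocalexistence}, forcing $h=f^*(\lambda)$. Now take $\lambda_k\to\lambda\in[0,\lambda_v^{\max})$; eventually $\lambda_k<\lambda_v^{\max}$, so by (i) each $f^*(\lambda_k)$ lies in $\prod_e[0,f_e^{\max})$, solves that system, and satisfies $\sum_e f^*_e(\lambda_k)=\lambda_k$. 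By compactness of $\cl(\mc F_v)=\prod_e[0,f_e^{\max}]$, pass to a subsequence $f^*(\lambda_{k_j})\to g$. If $\{e:\,g_e=f_e^{\max}\}$ were all of $\mc E^+_v$, then $\sum_e g_e=\lambda_v^{\max}$ while $\sum_e g_e=\lim_j\lambda_{k_j}=\lambda<\lambda_v^{\max}$; if it were a nonempty proper subset, Property~(b) applied along $\mu^{-1}(f^*(\lambda_{k_j}))$ would send the corresponding $G^v_e$ to $0$, whence $g_e=\lim_j\lambda_{k_j}G^v_e(\cdot)=0\ne f_e^{\max}$ — both impossible. So $g_e<f_e^{\max}$ for all $e$, and passing to the limit in the fixed-point system gives $\lambda G^v_e(\mu^{-1}(g))=g_e$; by the uniqueness fact $g=f^*(\lambda)$. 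Since every subsequence of $(f^*(\lambda_k))_k$ has a further subsequence converging to $f^*(\lambda)$, we get $f^*(\lambda_k)\to f^*(\lambda)$, i.e.\ continuity on $[0,\lambda_v^{\max})$. Finally, left-continuity at $\lambda_v^{\max}$ follows because any subsequential limit $g$ of $f^*(\lambda_k)$ with $\lambda_k\uparrow\lambda_v^{\max}$ satisfies $\sum_e g_e=\lambda_v^{\max}=\sum_e f_e^{\max}$ with $g_e\le f_e^{\max}$, forcing $g=(f_e^{\max})_e$; combined with the trivial right-continuity there, $f^*$ is continuous on all of $\R_+$.

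I expect the main obstacle to be precisely the continuity statement and, more generally, the control of subsequential limits of $f^*(\lambda_k)$ that might a priori drift toward the boundary of $\cl(\mc F_v)$: the device that rescues it — and that is reused throughout — is that Property~(b) makes any such boundary drift self-defeating, since the inflow fraction routed to a link whose density blows up vanishes, so a partially-saturated limit cannot be an equilibrium/fixed point. Everything else (the conservation identity, the elementary ``derivative of a convergent trajectory must vanish'' observation, and the summation trick) is routine.
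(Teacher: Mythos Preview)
Your proof is correct. For parts (i) and (ii) your argument is essentially the paper's: both exclude partial saturation via Property~(b) of Definition~\ref{def:myopicpolicy}, then use the conservation identity $\frac{\de}{\de t}\sum_e\rho_e=\lambda-\sum_e f_e$ to separate the regimes $\lambda<\lambda_v^{\max}$ and $\lambda\ge\lambda_v^{\max}$.

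The continuity argument, however, is genuinely different. The paper proceeds analytically: it writes the equilibrium condition as $H(\rho^v,\lambda)=0$ with $H_e=\lambda G^v_e(\rho^v)-\mu_e(\rho_e)$, uses Property~(a) together with $\mu_e'>0$ to show that $\nabla_{\rho^v}H$ is strictly diagonally dominant and hence invertible, and invokes the implicit function theorem to get continuity of $\rho^*(\lambda)$ on $(0,\lambda_v^{\max})$; the endpoints are handled by the constraint $\sum_e f_e^*(\lambda)=\lambda$. Your route is instead a compactness-plus-uniqueness argument: the fixed point in $\mc F_v$ is unique because any interior equilibrium is a constant trajectory and must coincide with the global limit $f^*(\lambda)$ from Lemma~\ref{lemmalocalexistence}; subsequential limits of $f^*(\lambda_k)$ are prevented from touching the boundary by Property~(b) (which forces the routed fraction, and hence the limit flow, to vanish on any partially saturated coordinate) and by the sum constraint; and the surviving interior limit is pinned down by uniqueness. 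Your approach is more elementary---it avoids the Jacobian computation and needs only continuity of $G^v$ and Property~(b), not differentiability or Property~(a) directly---whereas the paper's implicit-function argument yields slightly more (local smoothness of $\lambda\mapsto\rho^*(\lambda)$) at the cost of the extra linear-algebraic step.
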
 
\begin{proof}
Define $\rho^*\in\mc R_v$ by 
$$\rho^*_e:=\l\{\ba{lcl}\mu_e^{-1}(f_e^*(\lambda))&\se&f_e^*(\lambda)<f_e^{\max}\\[3pt]+\infty&\se&f_e^*(\lambda)=f_e^{\max}\,.\ea\r.$$ 
Now, by contradiction, assume that there exists a  nonempty proper subset $\mc J\subset\mc E^+_v$ such that $\rho^*_j<+\infty$ for every $j\in\mc J$, and $\rho^*_e=+\infty$ for every $k\in\mc K:=\mc E^+_v\setminus\mc J$. Thanks to Property (b) of Definition \ref{def:myopicpolicy}, one would have that, for any initial condition $\rho(0)\in\mc R$, the solution of (\ref{localsys}) satisfies 
$$\lim_{t\to+\infty}\sum_{k\in\mc K}\lambda G^v_k(\rho^v(t))-f_k(t)=-\sum_{k\in\mc K}f^{\max}_k<0\,,$$ 
so that there would exist some $\tau\ge0$ such that $$\sum_{k \in \mc K}(\lambda G^v_k(\rho^v(t))-f_k(t))\le0\,,\qquad \forall t\ge\tau\,.$$ Hence, 
$$
\sum_{k\in\mc K} \rho_k(t)=
\sum_{k\in\mc K} \rho_k(\tau)+ \int_\tau^t
\sum_{k\in\mc K} \l(\lambda G^v_k(\rho^v(s))-f_k(s))\r)\de s\le
\sum_{k\in\mc K}\rho_k(\tau)<+\infty\,, \qquad \forall t \geq \tau,
$$
which would contradict the assumption that $\rho^*_k=+\infty$ for every $k\in\mc K$. Therefore, either $\rho_e^*$ is finite for every $e\in\mc E^+_v$, or $\rho_e^*$ is infinite for every $e\in\mc E^+_v$.

In order to distinguish between the two cases, let $$\zeta(t):=\sum_{e\in\mc E^+_v}\rho_e(t)\,,\qquad\vartheta(t):=\sum_{e\in\mc E^+_v}f_e(t)\,.$$ Observe that, for all $t\ge\tau\ge0$, 
\be\label{chitxit}\zeta(t)=\zeta(\tau)+\int_{\tau}^t\l(\lambda-\vartheta(s)\r)\de s\,.\ee
First, consider the case when $\lambda<\lambda_v^{\max}$, and assume by contradiction that $\rho_e^*=+\infty$, and hence 
$f_e^*=f^{\max}_e$, for every $e\in\mc E^+_v$. This would imply that 
$$\lim_{t\to\infty}\vartheta(t)=\lambda_v^{\max}>\lambda\,,$$ 
so that there would exist some $\tau\ge0$ such that $\lambda-\vartheta(t)\le 0$ for every $t\ge\tau$, and hence (\ref{chitxit}) would imply that $\zeta(t)\le\zeta(\tau)<+\infty$ for all $t\ge\tau$, thus contradicting the assumption that $\rho_e(t)$ converges to $\rho_e^*=+\infty$ as $t$ grows large. Hence, necessarily $\rho^*\in\mc R_v$, and $f^*(\lambda)\in\mc F_v$. Therefore, being a finite limit point of the autonomous dynamical system (\ref{localsys}) with continuous right hand side, $\rho^*$ is necessarily an equilibrium, and so $f^*(\lambda)$ is an equilibrium flow for the local dynamical system (\ref{localsys}). 

On the other hand, when $\lambda\ge\lambda_v^{\max}$, (\ref{chitxit}) shows that $\zeta(t)$ is non-decreasing, hence convergent to some $\zeta(\infty)\in[0,+\infty]$ at $t$ grows large. Assume, by contradiction, that $\zeta(\infty)$ is finite. Then, passing to the limit of large $t$ in (\ref{chitxit}), one would get $$\int_{\tau}^{+\infty}(\lambda-\vartheta(s))\de s=\zeta(\infty)-\zeta(\tau)\le\zeta(\infty)<+\infty\,.$$
This, and the fact that $\vartheta(t)<\lambda_v^{\max}\le\lambda$ for all $t\ge0$, would imply that 
\be\label{xitimpossible}\lim_{t\to+\infty}\vartheta(t)=\lambda\,.\ee Since $f_e(t)<f_e^{\max}$, (\ref{xitimpossible}) is impossible if $\lambda>\lambda_v^{\max}$. On the other hand, if $\lambda=\lambda_v^{\max}$, then (\ref{xitimpossible}) implies that, for every $e\in\mc E^+_v$, $f_e(t)$ converges to $f_e^{\max}$, and hence $\rho_e(t)$ grows unbounded as $t$ grows large, so that $\zeta(\infty)$ would be infinite. Hence, if $\lambda\ge\lambda_v^{\max}$, then necessarily $\zeta(\infty)$ is infinite, and thanks to the previous arguments this implies that $\rho_e^*=+\infty$, and hence $f_e^*(\lambda) = f_e^{\max}$ for all $\sigma \in \mc R_v$,  $e\in\mc E^+_v$. 

Finally, it remains to prove continuity of $f^*(\lambda)$ as a function of $\lambda$. For this, consider the function $H:(0,+\infty)^{\mc E^+_v}\times(0,\lambda_v^{\max})\to\R^{\mc E^+_v}$ defined by 
$$H_e(\rho^v,\lambda):=\lambda G^v_e(\rho^v)-\mu_e(\rho_e)\,,\qquad \forall e\in\mc E^+_v\,.$$ 
Clearly, $H$ is differentiable and such that 
\begin{equation}
\label{eq:jacobian}
\frac{\partial}{\partial_{\rho_e}}H_e(\rho^v,\lambda)=\lambda\frac{\partial}{\partial_{\rho_e}}G^v_e(\rho^v)-\mu_e'(\rho_e)=-\sum_{j\ne e}\lambda\frac{\partial}{\partial_{\rho_e}}G^v_j(\rho^v)-\mu_e'(\rho_e)<-\sum_{j\ne e}\frac{\partial}{\partial_{\rho_e}}H_j(\rho^v,\lambda)\,,
\end{equation}
where the inequality follows from the strict monotonicity of the flow function (see Assumption \ref{ass:flowfunction}). Property (a) in Definition~\ref{def:myopicpolicy} implies that $\partial H_j(\rho^v,\lambda)/\partial \rho_e \geq 0$ for all $j \neq e \in \mc E_v^+$. Hence, from (\ref{eq:jacobian}), we also have that $\partial H_e(\rho^v,\lambda)/\partial \rho_e < 0$ for all $e \in \mc E_v^+$.
Therefore, for all $\rho^v\in(0,+\infty)^{\mc E^+_v}$, and $\lambda\in(0,\lambda_v^{\max})$, the Jacobian matrix $\nabla_{\rho^v} H(\rho^v,\lambda)$ is strictly diagonally dominant, and hence invertible by a standard application of the Gershgorin Circle Theorem, e.g., see Theorem 6.1.10 in \cite{Horn.Johnson:90}. It then follows from the implicit function theorem that $\rho^*(\lambda)$, which is the unique zero of $H(\,\cdot\,,\lambda)$, is continuous on the interval $(0,\lambda_v^{\max})$. Hence, also $f^*(\lambda)=\mu(\rho^*(\lambda))$ is continuous on $(0,\lambda_v^{\max})$, since it is the composition of two continuous functions. Moreover, since 
$$\sum_{e\in\mc E^+_v}f_e^*(\lambda)=\lambda\,,\qquad 0\le f_e^*(\lambda)\le f^{\max}_e\,,\qquad \forall e\in\mc E^+_v\,,\qquad \forall\lambda\in(0,\lambda_v^{\max})\,,$$
one gets that 
$$\lim_{\lambda\downarrow0}f_e^*(\lambda)=0\,,\qquad\lim_{\lambda\uparrow\lambda_v^{\max}} f_e^*(\lambda)=f_e^{\max}\,,$$ for all $e\in\mc E^+_v$. Now, one has that $\sum_{e\in\mc E^+_v}f_e^*(0)=0$, so that $$0=f_e^*(0)=\lim_{\lambda\downarrow0}f_e^*(\lambda)\,,\qquad\forall e\in\mc E^+_v\,.$$ Moreover, as previously shown, $$f_e^*(\lambda)=f_e^{\max}=\lim_{\lambda\uparrow\lambda_v^{\max}} f_e^*(\lambda)\,,\qquad \forall \lambda\ge\lambda_v^{\max}\,.$$ This completes the proof of continuity of $f^*(\lambda)$ on $[0,+\infty)$.
\end{proof}\medskip

While Lemma~\ref{lemmalocalexistence} ensures existence of a unique limit point for the local system (\ref{localsys}) with constant input $\lambda(t)\equiv\lambda$, the following lemma establishes a  monotonicity property with respect to a time-varying input $\lambda(t)$. 
\begin{lemma}[Monotonicity of the local system]\label{lemma:monotone}
Let $0\le v<n$ be a nondestination node, $G^v:\mc R_v\to\mc S_v$ a continuously differentiable map, satisfying Properties (a) and (b) of Definition~\ref{def:myopicpolicy}, and $\lambda^-(t)$, and $\lambda^+(t)$ be two nonnegative-real valued Lipschitz-continuous functions such that $\lambda^-(t)\le\lambda^+(t)$ for all $t \geq 0$. Let $\rho^-(t)$ and $\rho^+(t)$ be the solutions of the local dynamical system (\ref{localsys}) corresponding to the inputs $\lambda^-(t)$, and $\lambda^+(t)$, respectively, with the same initial condition $\rho^-(0)=\rho^+(0)$. Then 
\be\label{eq:monotonicity}
\rho^-_e(t)\le\rho^+_e(t)\,,\qquad\forall e\in\mc E^+_v\,,\ \forall t\ge0\,.
\ee
\end{lemma}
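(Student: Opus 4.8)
The plan is to prove the comparison result \eqref{eq:monotonicity} by a Kamke--M\"uller-type argument adapted to this specific system, exploiting the cooperative structure established via Property (a) of Definition~\ref{def:myopicpolicy}. Write $\psi(t):=\sum_{e\in\mc E^+_v}\bigl(\rho^-_e(t)-\rho^+_e(t)\bigr)^+$, the $\ell_1$-norm of the positive part of the difference; since $\psi(0)=0$ and $\psi(t)\ge0$, it suffices to show that $\psi$ cannot become positive, for which a differential inequality of the form $\dot\psi(t)\le L\,\psi(t)$ (a.e.) together with Gronwall does the job. The derivative of $\psi$ is governed by the set $\mc P(t):=\{e\in\mc E^+_v:\rho^-_e(t)>\rho^+_e(t)\}$ of ``violated'' links, and on that set
\[
\frac{\de}{\de t}\bigl(\rho^-_e-\rho^+_e\bigr)=\lambda^-(t)G^v_e(\rho^-(t))-\lambda^+(t)G^v_e(\rho^+(t))-\mu_e(\rho^-_e(t))+\mu_e(\rho^+_e(t)).
\]

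The key step is to bound the routing-term contribution $\sum_{e\in\mc P(t)}\bigl(\lambda^-G^v_e(\rho^-)-\lambda^+G^v_e(\rho^+)\bigr)$ from above. I would split it as $\lambda^-(t)\sum_{e\in\mc P(t)}\bigl(G^v_e(\rho^-)-G^v_e(\rho^+)\bigr)+\bigl(\lambda^-(t)-\lambda^+(t)\bigr)\sum_{e\in\mc P(t)}G^v_e(\rho^+)$. The second summand is $\le0$ because $\lambda^-\le\lambda^+$ and $G^v_e\ge0$. For the first, I would invoke the monotonicity argument underlying Lemma~\ref{lem:coop-ext}: if $\rho^+_e$ is decreased to $\rho^-_e$ precisely on the coordinates $e\in\mc P(t)$ while holding the others fixed at their (larger or equal) $\rho^+$-values, then by Property (a) the aggregate mass $\sum_{e\in\mc P(t)}G^v_e$ does not increase when those coordinates decrease (decreasing $\rho_e$ on $e\in\mc P$ can only increase $G^v_e$, but the remaining coordinates staying put forces, via $\sum_e G^v_e=1$ and the sign conditions, the sum over $\mc P$ not to grow)---this is exactly the ``$G_{\mc K}(\sigma)\le G_{\mc K}(\varsigma)$'' step in the proof of Lemma~\ref{lem:coop-ext}, applied with $\mc K=\mc P(t)$, $\sigma$ the mixed vector and $\varsigma=\rho^+(t)$. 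Thus $\sum_{e\in\mc P(t)}G^v_e(\rho^-(t))\le\sum_{e\in\mc P(t)}G^v_e(\rho^+(t))$, so the first summand is $\le0$ as well. Finally, on $\mc P(t)$ one has $\rho^-_e>\rho^+_e$, whence $-\mu_e(\rho^-_e)+\mu_e(\rho^+_e)<0$ by Assumption~\ref{ass:flowfunction}; combined with the above this already yields $\dot\psi(t)\le0$ wherever $\psi$ is differentiable with $\mc P(t)\neq\emptyset$, and $\psi\equiv0$ follows.

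A subtlety I would address carefully is the non-smoothness of $t\mapsto\psi(t)$ at times when some $\rho^-_e-\rho^+_e$ crosses zero. The clean way is to work with the Dini derivative and use the standard fact that for an absolutely continuous vector $x(t)$, $D^+\sum_e (x_e(t))^+\le\sum_{e:x_e(t)>0}\dot x_e(t)$ (with the convention that indices with $x_e=0$ but $\dot x_e>0$ would need to be included, but those contribute terms which, by the same sign analysis as above, are again nonpositive once one checks the routing and flow contributions at $\rho^-_e=\rho^+_e$); alternatively, one can first prove the inequality $\rho^-_e(t)\le\rho^+_e(t)+\eps\, t\, e^{Lt}$ after perturbing $\lambda^+$ to $\lambda^+ +\eps$ to make the inequality strict, and then let $\eps\downarrow0$. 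I expect this regularity bookkeeping, rather than the algebraic sign estimates, to be the main obstacle: the cooperative structure makes every ``interesting'' term nonpositive almost for free, but making the $\ell_1$-of-positive-part argument rigorous at the crossing times requires either the Dini-derivative machinery or the $\eps$-perturbation trick, and one must check that introducing the strict inequality does not destroy the sign bound on the routing term (it does not, since the bound $\sum_{e\in\mc P}G^v_e(\rho^-)\le\sum_{e\in\mc P}G^v_e(\rho^+)$ is insensitive to the perturbation of $\lambda^+$, which only enters multiplicatively and with the favorable sign).
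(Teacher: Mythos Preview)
Your approach is correct and uses the same algebraic ingredients as the paper: the splitting $\lambda^-G^v_e(\rho^-)-\lambda^+G^v_e(\rho^+)=\lambda^-\bigl(G^v_e(\rho^-)-G^v_e(\rho^+)\bigr)+(\lambda^--\lambda^+)G^v_e(\rho^+)$, the sign bound on the first summand via Lemma~\ref{lem:coop-ext}, and the monotonicity of $\mu_e$ for the flow term. One small correction: you do not need a ``mixed vector'' when invoking the $G_{\mc K}(\sigma)\le G_{\mc K}(\varsigma)$ step from the proof of Lemma~\ref{lem:coop-ext}---just take $\sigma=\rho^-(t)$, $\varsigma=\rho^+(t)$, and $\mc K=\mc P(t)$ directly, and the inequality $\sum_{e\in\mc P(t)}G^v_e(\rho^-)\le\sum_{e\in\mc P(t)}G^v_e(\rho^+)$ falls out immediately. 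The same path-integral argument continues to work when $\mc P(t)$ is enlarged by any subset of boundary indices with $\rho^-_e=\rho^+_e$, which is exactly what you need for the Dini-derivative bookkeeping at crossing times.

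The paper packages the same computation differently in a way that sidesteps the non-smoothness issue you flag: it argues by contradiction, letting $\tau$ be the first time some $\rho^-_e$ overtakes $\rho^+_e$, freezing the ``violated'' set $\mc I$ on a short interval $(\tau,\tau+\eps)$, and deriving $0<\chi(t)-\chi(\tau)\le0$ for $\chi(t)=\sum_{i\in\mc I}(\rho^-_i(t)-\rho^+_i(t))$ on that interval. Your Lyapunov formulation with $\psi(t)=\sum_e(\rho^-_e-\rho^+_e)^+$ is more systematic and is the natural starting point if one later wants a quantitative contraction estimate; the paper's first-crossing argument is more elementary and avoids Dini derivatives, at the cost of some implicit regularity claims about the partition $\mc I$ near $\tau$. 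Either route is fine here.
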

\begin{proof} 
For $e\in\mc E^+_v$, define $\tau_e:=\inf\{t\ge0:\,\rho^+_e(t)>\rho^-_e(t)\}$, and let $\tau:=\min\{\tau_e:\,e\in\mc E^+_v\}$. Assume by contradiction that $\rho^-_e(t)> \rho^+_e(t)$ for some $t\ge0$, and $e\in\mc E^+_v$. Then, $\tau<+\infty$, and $\mc I:=\argmin\{\tau_e:\,e\in\mc E^+_v\}$ is a well defined nonempty subset of $\mc E^+_v$. Moreover, by continuity, one has that there exists some $\eps>0$ such that, $\rho^-_i(\tau)=\rho^+_i(\tau)$, $\rho^-_i(t)>\rho^+_i(t)$, and $\rho^-_j(t)<\rho^+_j(t)$ for all $i\in\mc I$, $j\in\mc J$, and $t\in(\tau,\tau+\eps)$, where $\mc J:=\mc E^+_v\setminus\mc I$. Using Lemma \ref{lem:coop-ext}, one gets, for every $t\in(\tau,\tau+\eps)$,
$$\ba{rcl}
0&\ge&
\frac12\sum_{e}\sgn(\rho^-_e(t)-\rho^+_e(t))\l(G^v_e(\rho^-(t))-G^v_e(\rho^+(t))\r)\\[7pt]
&=&\frac12\l(\sum_{i}G^v_i(\rho^-(t))-\sum_{i}G^v_i(\rho^+(t))
-\sum_{j}G^v_j(\rho^-(t))+\sum_{j}G^v_j(\rho^+(t))\r)\\[7pt]
&=&\sum_{i}G^v_i(\rho^-(t))-\sum_{i}G^v_i(\rho^+(t))\,,\ea$$
where the summation indices $e$, $i$, and $j$ run over $\mc E^+_v$, $\mc I$, and $\mc J$, respectively. 
On the other hand, Assumption \ref{ass:flowfunction} implies that $\mu_i(\rho^-_i(t))\ge\mu_i(\rho^+_i(t))$ for all $i\in\mc I$, $t\in[\tau,\tau+\eps)$. Now, let $\chi(t):=\sum_{i\in\mc I} \left( \rho^-_i(t)-\rho^+_i(t) \right)\,.$
Then, for every $t\in(\tau,\tau+\eps)$, one has 
$$
\ba{rclcl}0&<&
\chi(t)-\chi(\tau)\\[7pt]&=&\ds
\int_\tau^t\lambda^-(s)\sum\nolimits_{i \in \mc I}\l(G^v_i(\rho^-(s))-G^v_i(\rho^-(s))\r)\de s\\[7pt]
&&\ds
-\int_\tau^t(\lambda^+(s)-\lambda^-(s))\sum\nolimits_{i \in \mc I}G^v_i(\rho^+(s))\de s
-\int_\tau^t\sum\nolimits_{i \in \mc I}\l(\mu_i(\rho^-_i(s))-\mu_i(\rho^+_i(s))\r)\de s\\&\le&0\,,
\ea
$$
which is a contradiction. Then, necessarily (\ref{eq:monotonicity}) has to hold true. 
\end{proof}\medskip

The following lemma establishes that the output of the local system (\ref{localsys}) is convergent, provided that the input is convergent. 
\begin{lemma}[Attractivity of the local dynamical system]
\label{lemma:attractivity}
Let $0\le v<n$ be a nondestination node, $G^v:\mc R_v\to\mc S_v$ a continuously differentiable map, satisfying Properties (a) and (b) of Definition~\ref{def:myopicpolicy}, and $\lambda(t)$ a nonnegative-real-valued Lipschitz continuous function such that 
\be\label{limlambda}
\lim_{t\to+\infty}\lambda(t)= \lambda\,.\ee
Then, for every initial condition $\rho(0)\in\mc R$, the solution of the local dynamical system (\ref{localsys}) satisfies
\be
\label{eq:exp-conv}\lim_{t\to+\infty}f_e(t)=f_e^{*}(\lambda)\,,\qquad \forall e \in \mc E_v^+\,,\ee
where $f^{*}(\lambda)$ is as defined in Lemma \ref{lemmalocalexistence}.
\end{lemma}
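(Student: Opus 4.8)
The plan is to prove the lemma by a sandwich (squeeze) argument: bound the solution $\rho(t)$ with input $\lambda(t)$ from above and below by solutions of the \emph{same} local system driven by suitable constant inputs, invoke the constant-input convergence result (Lemma~\ref{lemmalocalexistence}) for these bounds, and then collapse the bounds using the continuity of $\lambda\mapsto f^*(\lambda)$ (Lemma~\ref{lemma:f*(lambda)}). All three ingredients are available since $G^v$ is assumed to satisfy Properties (a) and (b).

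First I would fix $\eps>0$ and, using (\ref{limlambda}), pick $T_\eps\ge0$ such that
$\lambda^-_\eps:=\max\{0,\lambda-\eps\}\le\lambda(t)\le\lambda+\eps=:\lambda^+_\eps$ for all $t\ge T_\eps$. Let $\rho(t)$ be the solution of (\ref{localsys}) with input $\lambda(t)$ and initial condition $\rho(0)$, and let $\rho^-(t)$, $\rho^+(t)$ be the solutions of (\ref{localsys}) started at time $T_\eps$ from the common state $\rho(T_\eps)$, driven by the constant inputs $\lambda^-_\eps$ and $\lambda^+_\eps$, respectively. (Global existence of all these solutions for $t\ge T_\eps$ is immediate from $0\le G^v_e\le 1$ and $0\le\mu_e\le f_e^{\max}$, which keep each $\rho_e$ in a bounded region on bounded time intervals and nonnegative.) After the time shift $s=t-T_\eps$, the shifted solutions still solve (\ref{localsys}) with Lipschitz inputs and a common initial state, so Lemma~\ref{lemma:monotone} applies to the ordered pairs of inputs $(\lambda^-_\eps,\lambda(\cdot))$ and $(\lambda(\cdot),\lambda^+_\eps)$, yielding
$$\rho^-_e(t)\le\rho_e(t)\le\rho^+_e(t)\,,\qquad\forall e\in\mc E^+_v\,,\ \forall t\ge T_\eps\,.$$

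Since each $\mu_e$ is increasing (Assumption~\ref{ass:flowfunction}), this implies $\mu_e(\rho^-_e(t))\le f_e(t)\le\mu_e(\rho^+_e(t))$ for all $t\ge T_\eps$. By Lemma~\ref{lemmalocalexistence} the flows of the two auxiliary systems converge, independently of the initial state $\rho(T_\eps)$: $\mu_e(\rho^-_e(t))\to f^*_e(\lambda^-_\eps)$ and $\mu_e(\rho^+_e(t))\to f^*_e(\lambda^+_\eps)$ as $t\to+\infty$. Passing to the limit inferior and superior,
$$f^*_e(\lambda^-_\eps)\le\liminf_{t\to+\infty} f_e(t)\le\limsup_{t\to+\infty} f_e(t)\le f^*_e(\lambda^+_\eps)\,,\qquad\forall e\in\mc E^+_v\,.$$
Finally, letting $\eps\downarrow0$ so that $\lambda^\pm_\eps\to\lambda$, the continuity of $f^*(\cdot)$ on $[0,+\infty)$ from Lemma~\ref{lemma:f*(lambda)} forces both outer terms to $f^*_e(\lambda)$, hence $\lim_{t\to+\infty} f_e(t)=f^*_e(\lambda)$, which is (\ref{eq:exp-conv}).

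The argument is essentially a bookkeeping exercise once the three earlier lemmas are in hand; the only points needing a little care are the time shift so that Lemma~\ref{lemma:monotone} can be invoked with the common initial condition $\rho(T_\eps)$, and the truncation $\lambda^-_\eps=\max\{0,\lambda-\eps\}$ (harmless, since $f^*$ is continuous at $0$ with $f^*(0)=0$). If one wanted the exponential rate hinted at by the label of (\ref{eq:exp-conv}) rather than mere convergence, the natural route would be to quantify the contraction using the integral estimate (\ref{eq:chidot}) from the proof of Lemma~\ref{lemmalocalexistence}, but the squeeze above already suffices for the stated conclusion.
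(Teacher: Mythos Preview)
Your proof is correct and follows essentially the same approach as the paper: fix $\eps>0$, sandwich the solution between two auxiliary solutions started from the same state at time $T_\eps$ with constant inputs $\max\{0,\lambda-\eps\}$ and $\lambda+\eps$, apply Lemma~\ref{lemma:monotone}, then Lemma~\ref{lemmalocalexistence}, and finish with the continuity of $f^*(\cdot)$ from Lemma~\ref{lemma:f*(lambda)}. Your write-up is in fact slightly more careful than the paper's in making the time shift and the nonnegativity truncation explicit.
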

\begin{proof}
Fix some $\eps>0$, and let $\tau\ge0$ be such that $|\lambda(t)-\lambda|\le\eps$ for all $t\ge\tau$. For $t\ge\tau$, let $f^-(t)$ and $f^+(t)$ be the flow associated to the solutions of the local dynamical system (\ref{localsys}) with initial condition $\rho^-(\tau)=\rho^+(\tau)=\rho^v(\tau)$, and constant inputs $\lambda^-(t)\equiv\lambda^-:=\max\{\lambda-\eps,0\}$, and $\lambda^+(t)\equiv\lambda+\eps$, respectively. From Lemma \ref{lemma:monotone}, one gets that
\be\label{sandwitch}f^-_e(t)\le f_e(t)\le f^+_e(t)\,,\qquad \forall t\ge\tau\,,\qquad\forall e\in\mc E^+_v\,.\ee
On the other hand, Lemma \ref{lemmalocalexistence} implies that $f^-(t)$ converges to $f^*(\lambda^-)$, and $f^+(t)$ converges to $f^*(\lambda^+)$, as $t$ grows large. Hence, passing to the limit of large $t$ in (\ref{sandwitch}) yields 
$$f_e^*(\lambda^-)\le\liminf_{t\to+\infty}f_e(t)\le\limsup_{t\to+\infty}f_e(t)\le f^*_e(\lambda+\eps)\,,\qquad\forall e\in\mc E^+_v\,.$$
Form the arbitrariness of $\eps >0$, and the continuity of $f^*(\lambda)$ as a function of $\lambda$ by Lemma~\ref{lemma:f*(lambda)}, it follows that $f(t)$ converges to $f^*(\lambda)$, as $t$ grows large, which proves the claim. 
\end{proof}\medskip

We are now ready to prove Theorem \ref{thm:uniquelimitflow} by showing that, for any initial condition $\rho(0)\in\mc R$, the solution of the dynamical flow network (\ref{dynsyst}) satisfies 
\be\label{tildefe*}\lim_{t\to+\infty}f_e(t)=f_e^*\,,\ee  for all $e\in\mc E$.
We shall prove this by showing via induction on $v=0,1,\ldots,n-1$ that, for all $e\in\mc E^+_v$, there exists $f_e^*\in[0,f_e^{\max}]$ such that (\ref{tildefe*}) holds true. First, observe that, thanks to Lemma \ref{lemmalocalexistence}, this statement is true for $v=0$, since the inflow at the origin is constant. Now, assume that the statement is true for all $0\le v<w$, where $w\in\{1,\ldots,n-2\}$ is some intermediate node. Then, since $\mc E^-_w\subseteq\cup_{v=0}^{w-1}\mc E^+_v$, one has that $$\lim_{t\to+\infty}\lambda^-_w(t)=\lim_{t\to+\infty}\sum\nolimits_{e\in\mc E^-_w}f_e(t)=\sum\nolimits_{e\in\mc E^-_w}f_e^*=\lambda^*_w\,.$$ Then, Lemma \ref{lemma:attractivity} implies that, for all $e\in\mc E^+_w$, (\ref{tildefe*}) holds true with $f_e^*=f_e^*(\lambda^*_w)$, thus proving the statement for $v=w$. This proves the existence of a globally attractive limit flow $f^*$. The proof of Theorem \ref{thm:uniquelimitflow} is completed by Lemma~\ref{lemma:f*(lambda)}.
\medskip

 \section{Proof of Theorem~\ref{maintheo-weakstability}}
\label{sec:proof2} 
This section is devoted to the proof of Theorem~\ref{maintheo-weakstability} on the weak resilience of dynamical flow networks with locally responsive distributed routing policies $\mc G$. 


To start with, let us recall that in this case Theorem \ref{thm:uniquelimitflow} implies the existence of a globally attractive limit flow $\tilde f^*\in\cl(\mc F)$ for the perturbed dynamical flow network associated to any admissible perturbation $\tilde{\mc N}$. Define $\tilde\lambda^*_0=\lambda_0$, and $\tilde\lambda^*_v=\sum_{e\in\mc E^-_v}\tilde f^*_e$, for $0<v\le n$.

 \begin{lemma}\label{lemma:enoughflow}
Consider a dynamical flow network $\mc N$ satisfying Assumptions \ref{ass:acyclicity} and \ref{ass:flowfunction}, with locally responsive distributed routing policy $\mc G$ such that $G^v_e(\rho^v)>0$ for all $0\le v<n$, $e\in\mc E^+_v$, and $\rho^v\in\mc R_v$. Then, for every $\theta\ge1$, there exists $\beta_{\theta} \in (0,1)$ such that, if $\tilde{\mc N}$ is an admissible perturbation of $\mc N$ with stretching coefficient less than or equal to $\theta$, and $\tilde f^* \in\cl(\tilde{\mc F})$ is the limit flow vector of the corresponding perturbed dynamical flow network (\ref{pertdynsyst}), then 
$$\tilde f_e^*\ge\beta_{\theta}\tilde\lambda_v^*\,,$$ for every non-destination node $0\le v<n$, and every link $e\in\mc E^+_v$ for which  $\tilde f_e^*\le\tilde f_e^{\max}/2$.\end{lemma}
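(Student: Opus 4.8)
The plan is to use the local analysis of Section~\ref{sec:proofthmuniquelimit} to reduce the claim to a uniform (over admissible perturbations of stretching coefficient at most $\theta$) strictly positive lower bound on the routing fraction $G^v_e$ evaluated at the perturbed limit density, and then to obtain such a bound by combining the cooperativity Property~(a) of Definition~\ref{def:myopicpolicy} with the stretching-coefficient constraint of Definition~\ref{def:admissibeperturbation}.

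Fix $\theta\ge1$, an admissible perturbation $\tilde{\mc N}$ of stretching coefficient at most $\theta$, a non-destination node $0\le v<n$, and a link $e\in\mc E^+_v$ with $\tilde f_e^*\le\tilde f_e^{\max}/2$. Since $\tilde\mu_e$ satisfies Assumption~\ref{ass:flowfunction}, $\tilde f_e^{\max}>0$, hence $\tilde f_e^*<\tilde f_e^{\max}$. Arguing exactly as in the proof of Theorem~\ref{thm:uniquelimitflow}, the incoming flow $\tilde\lambda_v(t)=\sum_{j\in\mc E^-_v}\tilde f_j(t)$ is Lipschitz continuous and converges to $\tilde\lambda_v^*$; applying Lemma~\ref{lemma:attractivity} to the local system at $v$ with the perturbed flow functions (which satisfy Assumption~\ref{ass:flowfunction}), the perturbed local limit flow $\tilde f^{v*}:=\{\tilde f^*_j:\,j\in\mc E^+_v\}$ equals the local limit flow $\tilde f^*(\tilde\lambda_v^*)$ of Lemmas~\ref{lemmalocalexistence} and \ref{lemma:f*(lambda)}. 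As $\tilde f_e^*<\tilde f_e^{\max}$ rules out case~(ii) of Lemma~\ref{lemma:f*(lambda)}, case~(i) must hold: $\tilde f_j^*<\tilde f_j^{\max}$ for every $j\in\mc E^+_v$, so $\tilde\rho^{v*}:=\tilde\mu^{-1}(\tilde f^{v*})\in\mc R_v$ is finite, and
$$\tilde f_e^*\;=\;\tilde\lambda_v^*\,G^v_e(\tilde\rho^{v*})\,.$$
(The same relation also follows directly by passing to the limit in the $e$-th equation of (\ref{pertdynsyst}).)

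Next I would bound $\tilde\rho_e^*$ uniformly in the perturbation. From $\tilde f_e^*\le\tilde f_e^{\max}/2=\tilde\mu_e(\tilderhomedian_e)$, see \eqref{eq:median-def}, and monotonicity of $\tilde\mu_e^{-1}$, one gets $\tilde\rho_e^*\le\tilderhomedian_e$, while \eqref{thetadef} gives $\tilderhomedian_e\le\theta\,\rhomedian_e$; hence $0\le\tilde\rho_e^*\le\theta\,\rhomedian_e=:r_{v,e,\theta}$, a constant depending only on $\theta$ and $\mc N$. By Property~(a) of Definition~\ref{def:myopicpolicy}, $G^v_e$ is nondecreasing in each coordinate $\rho_j$, $j\in\mc E^+_v\setminus\{e\}$; hence replacing all those coordinates of $\tilde\rho^{v*}$ by $0$ does not increase $G^v_e$, so that
$$G^v_e(\tilde\rho^{v*})\;\ge\;\min\l\{G^v_e(\rho^v):\ \rho^v\in\mc R_v,\ \rho_j=0\ \forall\,j\in\mc E^+_v\setminus\{e\},\ 0\le\rho_e\le r_{v,e,\theta}\r\}\;=:\;\beta_{v,e,\theta}\,.$$
The minimization is over a compact set, and $G^v_e$ is continuous and, by hypothesis, strictly positive on $\mc R_v$, so $\beta_{v,e,\theta}>0$ and $\tilde f_e^*\ge\beta_{v,e,\theta}\,\tilde\lambda_v^*$. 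Since $G^v_e$ and $r_{v,e,\theta}$, hence $\beta_{v,e,\theta}$, are independent of the perturbation, setting $\beta_\theta:=\min\l\{\tfrac12,\ \min_{0\le v<n,\,e\in\mc E^+_v}\beta_{v,e,\theta}\r\}\in(0,1)$ (a minimum over finitely many indices) completes the proof.

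The one genuine difficulty is that $\tilde\rho^{v*}$ need not range over a compact set as the perturbation varies: the other outgoing links of $v$ may carry flow arbitrarily close to their perturbed capacity, pushing the corresponding densities to $+\infty$, and a sufficiently ``flat'' admissible perturbation of $\mu_e$ could make $\tilde\mu_e^{-1}(\tilde f_e^{\max}/2)$ arbitrarily large. This is precisely where the stretching-coefficient bound is needed (as anticipated in Section~\ref{sec:perturbations}): cooperativity Property~(a) lets us discard the possibly unbounded coordinates in the only direction that cannot spoil the lower bound, while the stretching-coefficient constraint caps the single coordinate that matters, $\tilde\rho_e^*$, uniformly over all admissible perturbations of stretching coefficient at most $\theta$; continuity and strict positivity of $G^v_e$ on the resulting compact set then yield the uniform constant $\beta_\theta$.
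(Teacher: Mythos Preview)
Your proof is correct and follows essentially the same approach as the paper: bound $\tilde\rho_e^*$ via the stretching coefficient, use cooperativity to handle the remaining (possibly unbounded) coordinates, and invoke the equilibrium relation $\tilde f_e^*=\tilde\lambda_v^*G^v_e(\tilde\rho^{v*})$ from Lemma~\ref{lemma:f*(lambda)}(i). The only cosmetic differences are that the paper routes the monotonicity step through Lemma~\ref{lem:coop-ext} and evaluates $G^v_e$ at the single extremal point $\rho^\theta$ (with $\rho^\theta_e=\theta\rhomedian_e$ and zeros elsewhere), whereas you invoke Property~(a) of Definition~\ref{def:myopicpolicy} directly and then minimize over the compact segment $[0,\theta\rhomedian_e]$; since $G^v_e$ is nonincreasing in $\rho_e$, these yield the same bound.
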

\begin{proof}
First, observe that the claim is trivially true if $\tilde f_e^*>\tilde f_e^{\max}/2$ for all $e\in\mc E$. Therefore, let us assume that there exists some link $e\in\mc E$ for which  $\tilde f_e^*\le\tilde f_e^{\max}/2$. Define $\rho^{\theta}\in\mc R_v$ by $\rho^\theta_j=0$ for all $j\in\mc E^+_v$, $j\ne e$, and $\rho^{\theta}_e=\theta\rhomedian_e$, where recall that $\rhomedian_e$ is the median density of the flow function $\mu_e$. Since the stretching coefficient of $\tilde{\mc N}$ is less than or equal to $\theta$, one has that the median densities of the perturbed and the unperturbed flow functions satisfy $\tilderhomedian_e\le\theta\rhomedian_e$. This and the fact that $\tilde f_e^*\le\tilde f_e^{\max}/2$ imply that $\tilde\rho_e^*\le\tilderhomedian_e\le\rho^{\theta}_e$, while clearly $\tilde\rho_j^*\ge0=\rho^{\theta}_j$ for all $j\in\mc E^+_v$, $j\ne e$. Now, let $\beta_{\theta}:=G^v_e(\rho^{\theta})$, and observe that, thanks to the assumption on the strict positivity of $G^v_e(\rho^v)$, one has $\beta_{\theta}>0$. Then, from Lemma \ref{lem:coop-ext} one gets that 
\be\label{Gveineq}G^v_e(\tilde\rho^*)=\frac12\l(G^v_e(\tilde\rho^*)+1-\sum\nolimits_{j\ne e}G^v_j(\tilde\rho^*)\r)\ge\frac12\l(G^v_e(\rho^{\theta})+1-\sum\nolimits_{j\ne e}G^v_j(\rho^{\theta})\r)= G^v_e(\rho^{\theta})=\beta_{\theta}\,.\ee
On the other hand, since $\tilde f_e^*\le\tilde f_e^{\max}/2<\tilde f^{\max}_e$, Lemma \ref{lemmalocalexistence} implies that necessarily $\tilde\lambda_v^*G^v_e(\tilde\rho^*)=\tilde f_e^*$. The claim now follows by combining this and (\ref{Gveineq}). 
\end{proof}

As a consequence of Lemma \ref{lemma:enoughflow}, we now prove the following result showing that the dynamical flow network is partially transferring and providing a lower bound on its weak resilience:  
\begin{lemma} \label{lemma:LBgammathetaalpha}
Let $\mc N$ be a flow network satisfying Assumptions \ref{ass:acyclicity} and \ref{ass:flowfunction}, $\lambda_0\ge0$ a constant inflow, and $\mc G$ a locally responsive distributed routing policy such that $G^v_e(\rho^v)>0$ for all $0\le v<n$, $e\in\mc E^+_v$, and $\rho^v\in\mc R_v$. Then, the associated dynamical flow network is partially transferring, and, for every $\theta\ge1$, and $\alpha\in(0,\beta_{\theta}^{n}]$, its resilience satisfies
$$\gamma_{\alpha,\theta}(\floweq,\mc G)\ge C(\mc N)-2|\mc E|\lambda_0\beta_{\theta}^{1-n}\alpha\,,$$
where $\beta_\theta\in(0,1)$ is as in Lemma~\ref{lemma:enoughflow}.
\end{lemma}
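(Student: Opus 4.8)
The strategy is to prove the resilience bound in contrapositive form and obtain partial transfer as a byproduct. Throughout I take $\lambda_0>0$ (if $\lambda_0=0$ every perturbed network is trivially $\alpha$-transferring, so all claims hold). Fix $\theta\ge1$ and $\alpha\in(0,\beta_\theta^n]$ with $\beta_\theta\in(0,1)$ furnished by Lemma~\ref{lemma:enoughflow}, and let $\tilde{\mc N}$ be an admissible perturbation of magnitude $\delta$ and stretching coefficient $\le\theta$ whose perturbed dynamical flow network is \emph{not} $\alpha$-transferring with respect to the initial flow in the statement. By Theorem~\ref{thm:uniquelimitflow} the perturbed dynamics converge to a unique limit flow $\tilde f^*$, independent of the initial condition, so non-$\alpha$-transfer is equivalent to $\tilde\lambda_n^*<\alpha\lambda_0$. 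I will show $\delta> C(\mc N)-2|\mc E|\lambda_0\beta_\theta^{1-n}\alpha$; since this holds for every such perturbation, $\gamma_{\alpha,\theta}(\floweq,\mc G)\ge C(\mc N)-2|\mc E|\lambda_0\beta_\theta^{1-n}\alpha$.

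The heart of the argument is a geometrically weighted origin--destination cut. Set $c_v:=\alpha\lambda_0\beta_\theta^{\,v-n}$ for $v\in\{0,\dots,n\}$ and $\mc U:=\{v\in\mc V:\ \tilde\lambda_v^*\ge c_v\}$. Then $0\in\mc U$ since $c_0=\alpha\lambda_0\beta_\theta^{-n}\le\lambda_0$, precisely because $\alpha\le\beta_\theta^n$, while $n\notin\mc U$ since $c_n=\alpha\lambda_0>\tilde\lambda_n^*$; thus $\mc U$ is a valid cut and $\mc E^+_{\mc U}\neq\emptyset$, because the path from $0$ to $n$ guaranteed by Assumption~\ref{ass:acyclicity} must leave $\mc U$. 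The exponents are chosen so that $c_w\le\beta_\theta c_u$ for every link $(u,w)\in\mc E$: the topological labelling (\ref{vertexordering}) forces $u<w$, hence $w-n\ge u-n+1$ and, since $\beta_\theta\in(0,1)$, $\beta_\theta^{\,w-n}\le\beta_\theta^{\,u-n+1}$.

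Now fix a cut link $e=(u,w)\in\mc E^+_{\mc U}$. Since $u\in\mc U$ but $n\notin\mc U$, node $u$ is non-destination and $\tilde\lambda_u^*\ge c_u>0$, so Lemma~\ref{lemma:enoughflow} yields $\tilde f_e^*\ge\min\{\beta_\theta\tilde\lambda_u^*,\ \tilde f_e^{\max}/2\}$ (either $\tilde f_e^*>\tilde f_e^{\max}/2$, or $\tilde f_e^*\le\tilde f_e^{\max}/2$ and the lemma applies directly). On the other hand $\tilde f_e^*\le\tilde\lambda_w^*<c_w\le\beta_\theta c_u\le\beta_\theta\tilde\lambda_u^*$, which forces the minimum to be $\tilde f_e^{\max}/2$ and hence $\tilde f_e^{\max}<2\beta_\theta c_u=2\alpha\lambda_0\beta_\theta^{\,u-n+1}\le 2\alpha\lambda_0\beta_\theta^{1-n}$. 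Combining this with $\delta_e\ge f_e^{\max}-\tilde f_e^{\max}$ (take $\rho_e\to\infty$ in the definition of $\delta_e$), summing over $e\in\mc E^+_{\mc U}$, and using $\sum_{e\in\mc E^+_{\mc U}}f_e^{\max}\ge C(\mc N)$ together with $|\mc E^+_{\mc U}|\le|\mc E|$ gives $\delta\ge\sum_{e\in\mc E^+_{\mc U}}(f_e^{\max}-\tilde f_e^{\max})> C(\mc N)-2|\mc E|\lambda_0\beta_\theta^{1-n}\alpha$, as wanted.

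Finally, partial transfer of the unperturbed network follows by applying the contrapositive just proved to the zero perturbation ($\delta=0$, stretching coefficient $1$): were the limit flow to satisfy $\lambda_n^*=0$, then $\lambda_n^*<\alpha\lambda_0$ for every $\alpha\in(0,\beta_1^n]$, so the bound would give $0>C(\mc N)-2|\mc E|\lambda_0\beta_1^{1-n}\alpha$ for all such $\alpha$, which is impossible once $\alpha$ is small enough, since $C(\mc N)>0$ (any cut contains at least one link, whose flow capacity is positive because $\mu_e$ is strictly increasing with $\mu_e(0)=0$). Hence $\lambda_n^*>0$. I anticipate the only genuinely delicate step to be pinning down the geometric weights $c_v$ and the case analysis for cut links; the remainder is flow-conservation bookkeeping together with direct invocations of Lemma~\ref{lemma:enoughflow} and Theorem~\ref{thm:uniquelimitflow}.
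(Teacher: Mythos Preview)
Your proof is correct and somewhat cleaner than the paper's. The paper proceeds by iteratively building a sequence of nodes $0=v_0,v_1,\dots,v_k=n$: at step $j^*$ it sets $\mc U=\{v_0,\dots,v_{j^*-1}\}$ and argues by contradiction that some link in $\mc E^+_{\mc U}$ must carry limit flow at least $\lambda_0\alpha\beta_\theta^{j^*-n}$; otherwise Lemma~\ref{lemma:enoughflow} forces all perturbed cut capacities to be small, contradicting the assumed bound on $\delta$. You bypass this inductive path construction by defining the cut in one shot via the node thresholds $c_v=\alpha\lambda_0\beta_\theta^{v-n}$. The key mechanism is identical---the geometric decay in the threshold absorbs the factor $\beta_\theta$ coming from Lemma~\ref{lemma:enoughflow}, and the topological ordering guarantees the single-step drop $c_w\le\beta_\theta c_u$ across every link---but your formulation makes this structure explicit and avoids the bookkeeping of the step-by-step induction. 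What the paper's construction buys is an explicit ``good path'' from origin to destination, which is conceptually suggestive but not actually needed for the resilience bound; your cut argument is more direct and arguably more natural for a statement about $\gamma_{\alpha,\theta}$.
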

\begin{proof} 
Consider an arbitrary admissible perturbation $\tilde {\mc N}$ of magnitude \be\label{deltahypothesis}\delta\le C(\mc N)-2|\mc E|\lambda_0\beta_{\theta}^{1-n}\alpha\,,\ee
and stretching coefficient less than or equal to $\theta$. 
We shall iteratively select a sequence of nodes $0=:v_0,v_1,\ldots,v_k:=n$ such that, for every $1\le j\le k$, 
\be\label{existseinDv}\exists i\in\{0,\ldots,j-1\}\qquad\text{such that}\qquad (v_i,v_j)\in\mc E\,,\quad
\tilde f^*_{(v_i,v_j)}\ge\lambda_0\alpha\beta_{\theta}^{j-n}\,.\ee
Since ${v_k}=n$, and $\beta_{\theta}^{k-n}\ge1$, the above with $j=k\le n$ will immediately imply that 
\be\label{alphatranfer}\lim_{t\to+\infty}\tilde\lambda_n(t)=\tilde\lambda^*_n=\sum\nolimits_{e\in\mc E^-_n}\tilde f_e^*\ge\alpha\lambda_0\beta_{\theta}^{k-n}\ge\alpha\lambda_0\,,\ee
so that the perturbed dynamical flow network is $\alpha$-transferring. For $0<\alpha\le \beta_{\theta}^{n-1}/(2|\mc E|\lambda_0)$, one could chose a trivial perturbation $\tilde{\mc N}=\mc N$ so that (\ref{alphatranfer}) would imply the partial transferring property of the original dynamical flow network. Moreover, the rest of the claim will then readily follow from the arbitrariness of the considered admissible perturbation. 

First, let us consider the case $j=1$. Assume by contradiction that $\tilde f^*_e<\lambda_0\alpha\beta_{\theta}^{1-n}$, for every link $e\in\mc E^+_0$. Since $\alpha\le\beta_{\theta}^n$, this would imply that $\tilde f^*_e<\beta_{\theta}\lambda_0$ and hence, by Lemma  \ref{lemma:enoughflow}, that $\tilde f^{\max}_e\le2\tilde f_e^*$ for all $e\in\mc E^+_0$, so that 
$$\sum\nolimits_{e\in\mc E^+_0}\tilde f^{\max}_e\le2\sum\nolimits_{e\in\mc E^+_0}\tilde f^{*}_e<2 \alpha |\mc E_0^+|\beta_{\theta}^{1-n}\lambda_0\le2 \alpha |\mc E|\beta_{\theta}^{1-n}\lambda_0\,.$$
Combining the above with the inequality $C(\mc N)\le\sum_{e\in\mc E^+_0} f_e^{\max}$, one would get 
$$\delta\ge\sum\nolimits_{e\in\mc E^+_0} \left(f^{\max}_e-\tilde f^{\max}_e\right)> C(\mc N)-2 \alpha |\mc E|\beta_{\theta}^{1-n}\lambda_0\,,$$ thus contradicting the assumption (\ref{deltahypothesis}). Hence, necessarily there exists $e\in\mc E^+_0$ such that $\tilde f^*_e\ge\lambda_0\alpha\beta_{\theta}^{1-n}$, and choosing $v_1$ to be the unique node in $\mc V$ such that $e\in\mc E^-_{v_1}$, one sees that (\ref{existseinDv}) holds true with $j=1$.

Now, fix some $1< j^*\le k$, and assume that (\ref{existseinDv}) holds true for every $1\le j<j^*$. Then, by choosing $i$ as in (\ref{existseinDv}), one gets that 
\be\label{ineq1}\tilde\lambda_{v_j}^*=\sum\nolimits_{e\in\mc E^+_{v_j}}\tilde f^*_e\ge\tilde f^*_{(v_i,v_{j})}\ge\lambda_0\alpha\beta_{\theta}^{j-n}\ge\lambda_0\alpha\beta_{\theta}^{j^*-1-n}\,,\qquad \forall 1\le j<j^*\,.\ee
Moreover, 
\be\label{ineq2}\tilde\lambda_{v_0}^*=\lambda_0>\lambda_0\alpha\beta_{\theta}^{-n}\ge\lambda_0\alpha\beta_{\theta}^{j^*-1-n}\,.\ee
Let $\mc U:=\{v_0,v_1,\ldots,v_{j^*-1}\}$ and $\mc E_{\mc U}^+\subseteq\mc E$ be the set of links with tail node in $\mc U$ and head node in $\mc V\setminus\mc U$. Assume by contradiction that 
$$\tilde f^*_e<\lambda_0\alpha\beta_{\theta}^{j^*-n}\,,\qquad \forall e\in\mc E^+_{\mc U}\,.$$ Thanks to (\ref{ineq1}) and (\ref{ineq2}), this would imply that, $\tilde f^*_e<\beta_{\theta}\tilde\lambda_j^*$, for every $0\le j<j^*$ and $e\in\mc E^+_{v_j}\cap\mc E^+_{\mc U}$. Then, since $\mc E^+_{\mc U}=\cup_{j=0}^{j^*-1}(\mc E_{v_j}^+\cap\mc E^+_{\mc U})$, Lemma \ref{lemma:enoughflow} would imply that 
$\tilde f^{\max}_e\le 2\tilde f^*_e$,for every $e\in\mc E^+_{\mc U}\,.$ 
This would yield
$$\sum\nolimits_{e\in\mc E_{\mc U}^+}\tilde f^{\max}_e\le\sum\nolimits_{e\in\mc E_{\mc U}^+}2\tilde f^{*}_e<2\sum\nolimits_{e\in\mc E_{\mc U}^+}\lambda_0\alpha\beta_{\theta}^{j^*-n}\le2|\mc E|\lambda_0\alpha\beta_{\theta}^{1-n}\,.$$
From the above and the inequality $C(\mc N)\le\sum_{e\in\mc E^+_{\mc U}} f_e^{\max}$, one would get 
$$\delta\ge\sum\nolimits_{e\in\mc E^+_{\mc U}} \left( f^{\max}_e-\tilde f^{\max}_e \right)> C(\mc N)-2 \alpha |\mc E|\beta_{\theta}^{1-n}\lambda_0\,,$$ thus contradicting the assumption (\ref{deltahypothesis}). Hence, necessarily there exists $e\in\mc E^+_{\mc U}$ such that $\tilde f^*_e\ge\lambda_0\alpha\beta_{\theta}^{1-n}$, and choosing $v_{j^*}$ to be the unique node in $\mc V$ such that $e\in\mc E^-_{v_{j^*}}$ one sees that (\ref{existseinDv}) holds true with $j=j^*$. Iterating this argument until $v_{j^*}=n$ proves the claim. 
\end{proof}\medskip
 
It is now easy to see that Lemma \ref{lemma:LBgammathetaalpha} implies that $\lim_{\alpha\downarrow0}\gamma_{\alpha,\theta}\ge C(\mc N)$ for every $\theta\ge1$, thus showing that $\gamma_0(f^{\circ},\mc G)\ge C(\mc N)$. Combined with Proposition \ref{propUB}, this shows that $\gamma_0(f^{\circ},\mc G)=C(\mc N)$, thus completing the proof of Theorem \ref{maintheo-weakstability}.

\section{Conclusion}
\label{sec:conclusion}
In this paper, we studied robustness properties of dynamical flow networks, where the dynamics on every link is driven by the difference between the inflow, which depends on the upstream routing decisions, and the outflow, which depends on the particle density, on that link.  We proposed a class of locally responsive distributed routing policies that rely only on local information about the network's current particle densities and yield the maximum weak resilience with respect to adversarial disturbances that reduce the flow functions of the links of the network. We also showed that the weak resilience of the network in that case is equal to min-cut capacity of the network, and that it is independent of the local information constraint and the initial flow. 
Strong resilience of dynamical flow networks is studied in the companion paper \cite{PartII}. 

 

 \bibliographystyle{ieeetr}%
  \bibliography{KS-transportation}

\end{document}